\definecolor{linkblue}{HTML}{0000EE}
\renewcommand{\@tododisplay}[1]{%
\marginpar{#1}%
}
\renewcommand\@displaytodo[2][\todomark]{%
\@tododisplay{{\todoformat #1~(\ref{todolbl:\thetodo})}}%
\footnote[\thetodo]{\todoformat #1:~#2}%
\global\@todotoks\expandafter{\the\@todotoks\todoitem{#1}{#2}}%
\@todotrue%
}%
\renewcommand\todomark{todo}
\newcommand {\mm}[1] {\ensuremath{#1}}
\newcommand {\scalprod}[2] {{\langle #1 , #2 \rangle}}
\newcommand{\ignore}[1]{}
\long\def\@makecaption#1#2{%
  \vskip\abovecaptionskip
  \sbox\@tempboxa{\small #1: #2}%
  \ifdim \wd\@tempboxa >\hsize
    \small #1: #2\par
  \else
    \global \@minipagefalse
    \hb@xt@\hsize{\hfil\box\@tempboxa\hfil}%
  \fi
  \vskip\belowcaptionskip}
\newcommand{\Rspace}        {\mm{{\mathbb R}}}
\newcommand{\sVdom}[3]      {\mm{{\rm Vor}_{#1}({#2},{#3})}}
\newcommand{\Ball}[2]       {\mm{B_{#1}{({#2})}}}
\newcommand{\sDel}[3]       {\mm{{\rm Del}_{#1}({#2},{#3})}}
\newcommand{\sDelTri}[2]    {\mm{{\rm Del}({#1},{#2})}}
\newcommand{\DelTri}[1]     {\mm{{\rm Del}({#1})}}
\newcommand{\Hasse}[1]      {\mm{{\mathcal H}{({#1})}}}
\newcommand{\Digraph}       {\mm{{\mathcal G}}}
\newcommand{\Singular}[2]   {\mm{{\rm Sing}_{#1}{({#2})}}}
\newcommand{\DSphere}[2]    {\mm{S{({#1},{#2})}}}
\newcommand{\Dsqradius}[2]  {\mm{s{({#1},{#2})}}}
\newcommand{\Incl}[1]       {\mm{{\rm Incl\,}{#1}}}
\newcommand{\Excl}[1]       {\mm{{\rm Excl\,}{#1}}}
\newcommand{\dsc}[1]        {\mm{{\downarrow}\,{#1}}}
\newcommand{\norm}[1]       {\mm{\|{#1}\|}}
\newcommand{\Edist}[2]      {\mm{d({#1},{#2})}}
\DeclareMathOperator{\DelOp}     {Del}
\DeclareMathOperator{\DelCechOp} {Del\v{C}ech}
\DeclareMathOperator{\CechOp}    {\v{C}ech}
\DeclareMathOperator{\WrapOp}    {Wrap}
\newcommand{\Cech}[2]       {\mm{\CechOp}_{#1}({#2})}
\newcommand{\DCech}[2]      {\mm{\DelCechOp}_{#1}({#2})}
\newcommand{\Del}[2]        {\mm{\DelOp}_{#1}({#2})}
\newcommand{\Wrap}[2]       {\mm{\WrapOp}_{#1}({#2})}
\newcommand{\radfun}[1]        {\mm{s_{#1}}}
\newcommand\restr[2]{{\left.\kern-\nulldelimiterspace#1\right|_{#2}}}
\newcommand\opdef[1]{\expandafter\DeclareMathOperator\csname #1\endcsname{#1}}
\DeclareMathOperator{\Front} {Front}
\DeclareMathOperator{\Back} {Back}
\DeclareMathOperator{\On} {On}
\DeclareMathOperator{\dime} {dim}
\newcommand{\vor}[1]        {\mathop \mathrm{Vor}{({#1})}}
\newcommand{\later}[1]     {{%
}}
\newtheorem{theorem}{Theorem}[section]
\newtheorem*{theorem*}{Theorem}
\newtheorem{lemma}[theorem]{Lemma} 
\newtheorem{corollary}[theorem]{Corollary}
\theoremstyle{definition}
\newtheorem{definition}[theorem]{Definition}
\newcommand*{\myref}[1]{%
  \nameref*{#1}%
  ~\hyperref[{#1}]{%
  \ref*{#1}%
  }}
\newcommand*{\theref}[1]{%
  the \myref{#1}%
  }
\renewcommand{\paragraph}[1]{\subsection{#1}}
\title{The Morse Theory of \goodbreak\v{C}ech and Delaunay Complexes
       }
\author{
Ulrich Bauer
\and
Herbert Edelsbrunner%
}
\subjclass[2010]{Primary 52C99; Secondary 51F99, 55U10, 57Q10}
\begin{document}

\maketitle

\begin{abstract}
  Given a finite set of points in $\Rspace^n$ and a radius parameter,
  we study the \v{C}ech, Delaunay--\v{C}ech, Delaunay (or alpha),
  and Wrap complexes in the light of generalized discrete Morse theory.
  Establishing the \v{C}ech and Delaunay complexes as sublevel sets
  of generalized discrete Morse functions, we prove that the four complexes
  are simple-homotopy equivalent by a sequence of simplicial collapses,
  which are explicitly described by a single discrete gradient field.
\end{abstract}

\section{Introduction}
\label{sec1}

The burgeoning field of \emph{topological data analysis} was born from
the idea that results in algebraic topology can be fruitfully applied
to timely challenges in
data analysis \cite{Dey1999Computational,Carlsson2009Topology}.
It rests on the time-tested method of modeling by abstraction,
which in this setting means that we interpret the data as a finite sample
of a topological space.
Since we are given the data -- but not the space -- we construct a family
of hypothetical spaces and gain insights into the data from
general topological properties of these spaces and the relationships
between them.

\begin{figure}[t]
  \centering 
  \vspace{0.1in}
  \includegraphics[width=.40\textwidth]{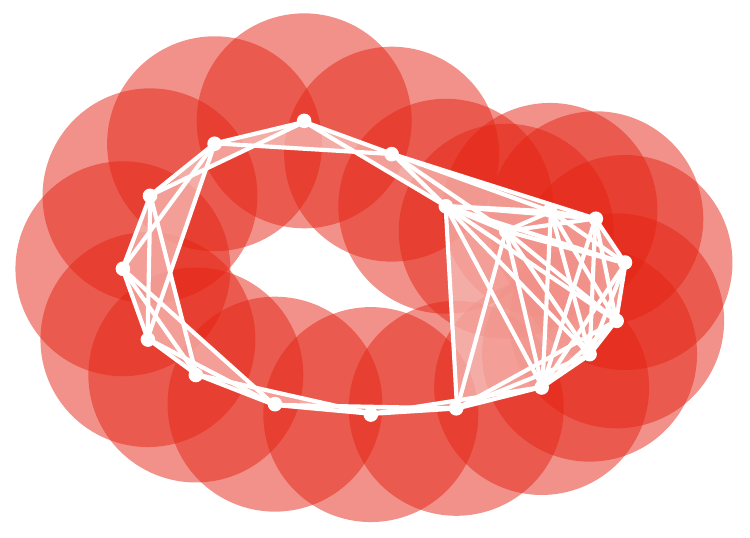} \quad%
  \includegraphics[width=.40\textwidth]{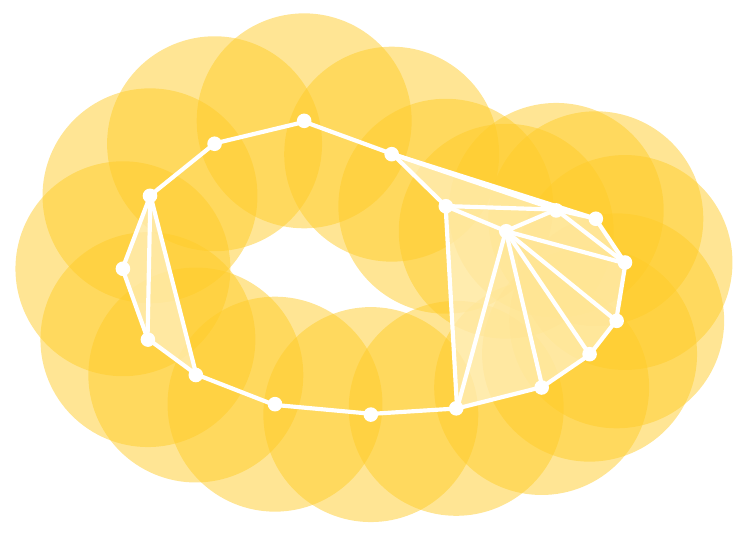}%
  \\
  \vspace{0.1in}
  \includegraphics[width=.40\textwidth]{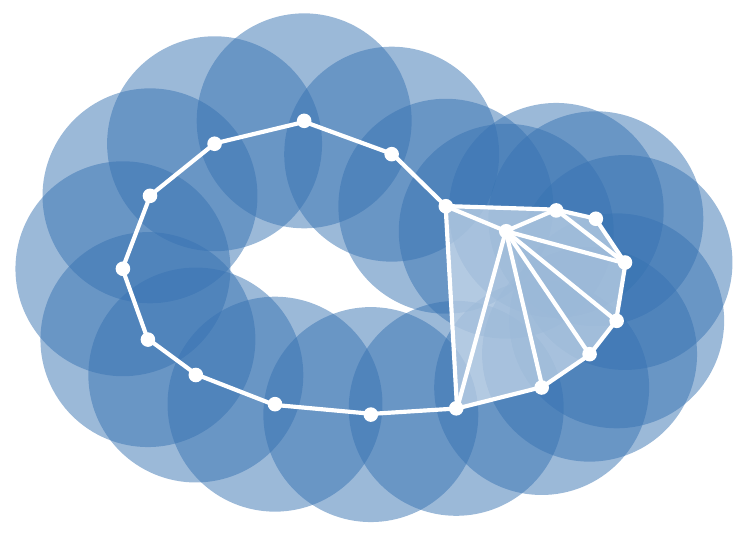} \quad%
  \includegraphics[width=.40\textwidth]{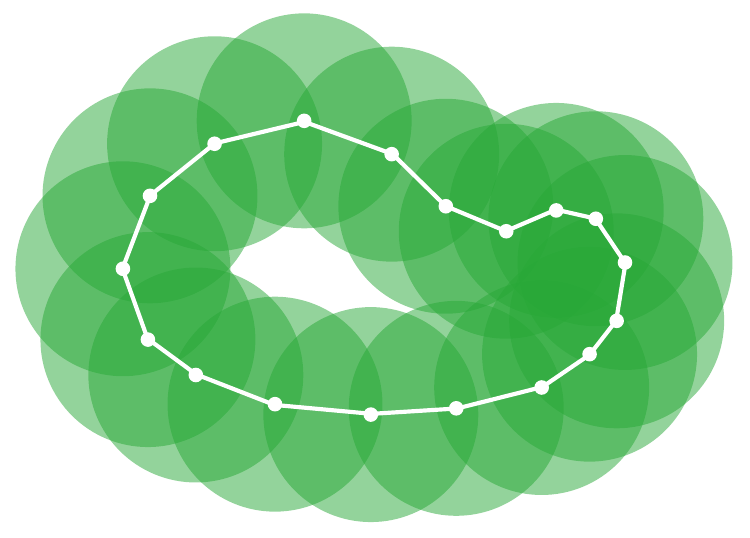}%
  \caption{The four different geometric complexes appearing in the collapsing
    sequence of the main theorem, drawn on top of the union of balls $B_r(X)$,
    to which they are all homotopy-equivalent.
    \emph{In sequence}: the high-dimensional \v{C}ech complex, $\Cech{r}{X}$,
    projected onto the plane,
    the Delaunay--\v{C}ech complex, $\DCech{r}{X}$,
    the Delaunay complex, $\Del{r}{X}$, and the Wrap complex, $\Wrap{r}{X}$.}
  \label{fig:complexes}
\end{figure}

\paragraph{Results}
Assuming the data consist of a set of points $X \subseteq \Rspace^n$,
we have several discrete geometric constructions at our disposal
that use the Euclidean distance and a scale parameter, $r$,
to convert the data into a filtration of spaces.
If the balls of radius~$r$ centered at $p+1$ data points have
a non-empty common intersections, then we may add the $p$-simplex
they span to the space representation.
This gives the \emph{\v{C}ech complex} for radius $r$,
denoted by $\Cech{r}{X}$, which is known to have the same homotopy type
as the union of balls, $\Ball{r}{X}$
\cite[Chapter III]{Edelsbrunner2010Computational}.
The construction of \v{C}ech complexes originates from the definitions
of cohomology theories for general topological spaces due to Alexandroff
and \v Cech \cite{Alexandroff1927Simpliziale,Cech1933Theorie,Cech1993General}.
Besides its foundational role in algebraic topology,
it was introduced by Carlsson and de Silva \cite{Silva2004Topological}
as a core construction in topological data analysis,
and has received significant interest in this area as well as
in stochastic geometry and topology \cite{Kahle2011Random}.

Alternatively, we may first intersect each ball with the Voronoi domain
of its center and then take the nerve.
This gives the \emph{alpha} or \emph{Delaunay complex} for radius $r$,
denoted by $\Del{r}{X}$, which embeds in $\Rspace^n$,
is a subcomplex of the \v{C}ech complex,
and has the same homotopy type \cite{Edelsbrunner1995Union}.
The Delaunay complex was first introduced under the name
\emph{$\alpha$-shape} as a construction for associating a geometric shape
to a finite set of points in the plane \cite{Edelsbrunner1983Shape}.
It has become a popular method both in computational geometry and topology;
see the survey \cite{Edelsbrunner2010Alpha}.
Not surprisingly, the \v{C}ech complex collapses to the Delaunay complex
for the same radius,
but this was an open question prior to this paper.

As a third option, we may collect all simplices in the \v{C}ech complex
that belong to the Delaunay triangulation.
This gives the \emph{Delaunay--\v{C}ech complex} for radius $r$,
denoted by $\DCech{r}{X}$.
This construction is a convenient alternative to Delaunay complexes
\cite{Chintakunta2013Topology,Pokorny2016Topological},
requiring only the computation of the Delaunay triangulation
and the smallest enclosing sphere of each Delaunay simplex.

As a fourth option, we may construct the \emph{Wrap complex},
denoted by $\Wrap{r}{X}$, which is a subcomplex of the
Delaunay triangulation \cite{Edelsbrunner2003Surface}.
Going beyond topological characterizations, it gives a geometric
description of the data and has been successfully employed within
commercial settings in software for surface reconstruction from point data,
serving as the foundation of the software package Geomagic Wrap®.
We extend the original $3$-dimensional notion to $\Rspace^n$
and introduce a dependence on a radius parameter, with the original 
definition corresponding to radius $\infty$.
Formulating the Wrap complex within Forman's discrete Morse theory
\cite{Forman1998Morse},
we answer an open question in \cite{Edelsbrunner2003Surface}.
Our main result is that the four complexes are related by simplicial
collapses, and that this property extend to natural generalizations
of the complexes to points with weights.
\begin{theorem*}[\v Cech--Delaunay Collapsing Theorem]
  Let $X$ be a finite set of possibly weighted points in general position
  in $\Rspace^n$.
  Then
  \begin{align*}
    \Cech{r}{X} \searrow \DCech{r}{X} \searrow \Del{r}{X} \searrow \Wrap{r}{X}
  \end{align*}
  for every $r \in \Rspace$.
\end{theorem*}
This establishes the \emph{simple-homotopy} equivalence of the four
different complexes, a particular type of homotopy equivalence that admits 
a purely combinatorial description in terms of elementary collapses
and expansions and is algebraically characterized by the vanishing
of Whitehead torsion \cite{Cohen1973Course}.

Our proofs are based on the insight that the Delaunay and \v{C}ech complexes arise
as sublevel sets of generalized discrete Morse functions with shared structural properties.
We refer to \cite{Forman1998Morse} for an introduction to discrete Morse theory,
and to \cite{Chari2000Discrete,Freij2009Equivariant} for the generalization of the discrete
gradient to allow for intervals larger than pairs.
Our constructions are elementary, using the radii of smallest enclosing
spheres and smallest empty circumspheres to define the
generalized Morse functions.
The geometric arguments are couched in the language of convex optimization,
in which there is little difference between ordinary points
and points with weights.
Indeed, all our results generalize to the weighted setting,
and thus relate to the theory of
power diagrams and regular triangulations \cite{Gelfand1994Discriminants}.

The common structure of Delaunay and \v{C}ech complexes leads naturally
to a generalization of the two constructions, which we call
the \emph{selective Delaunay complex}, $\sDel{r}{X}{E}$,
defined for a subset $E \subseteq X$ of \emph{excluded} points.
The construction of the selective Delaunay complex is based on smallest
enclosing spheres of subsets $Q \subseteq X$ whose interiors are empty
of the excluded points $E$.
The Delaunay and \v{C}ech complexes arise as special cases
with $E=X$ and $E=\emptyset$, respectively.
The main collapsing theorem is derived from the following statement,
which relates selective Delaunay complex with different excluded sets
through simplicial collapses:
\begin{theorem*}[Selective Delaunay Collapsing Theorem]
  Let $X$ be a finite set of possibly weighted points
  in general position in $\Rspace^n$, and let $E \subseteq F \subseteq X$.
  Then
  \begin{align*}
    \sDel{r}{X}{E} \searrow \sDel{r}{X}{E} \cap \sDelTri{X}{F}
                   \searrow \sDel{r}{X}{F}
  \end{align*}
  for every $r \in \Rspace$.
\end{theorem*}
It is worth mentioning that the concept of selective Delaunay complexes
enables the explicit construction of a sequence of maps in homology 
between Delaunay complexes $\Del{r}{X}$ and $\Del{r}{Y}$
that is equivalent to the maps in homology induced by the inclusions
$\Ball{r}{X}, \Ball{r}{Y} \hookrightarrow {\Ball{r}{X \cup Y}}$.

\paragraph{Related work}
A simple-homotopy version of the Nerve Theorem appears in \cite{Barmak2011Quillen}.
The author proves that a good cover of a simplicial complex
by subcomplexes has a nerve that is simple-homotopy equivalent to the complex.
Note however that this does not imply the stronger result that
the \v Cech complex collapses simplicially to the Delaunay complex.

The \emph{flow complex} introduced in \cite{Giesen2008Flow} is conceptually similar
to the Wrap complex.
Being based on the gradient flow of the distance function to a point set, this
construction is however less combinatorial.
In general, the flow complex is not a subcomplex of the Delaunay triangulation,
and its computation remains challenging~\cite{Cazals2008Robust}.
It has been shown to have the same homotopy type as the Delaunay complex
for the same radius \cite{Buchin2008Recursive}.
Our results imply that it has the same homotopy type as the Delaunay--\v{C}ech complex
and the Wrap complex, all for the same radius.

The structure of the generalized gradients has been described before
for the special case of \v{C}ech filtrations in \cite{Attali2013VietorisRips},
and for Delaunay filtrations in \cite{Edelsbrunner2003Surface,vanManen2014Power}.
Our Morse-theoretic treatment of selective Delaunay complexes systematically unifies and 
generalizes these results.
The continuous Morse theory for distance functions to finite point sets in Euclidean space 
has also been investigated in \cite{Siersma99Voronoi,Lieutier2004Any,Bobrowski2014Distance}. 
By the homotopy-equivalence of unions of balls, \v{C}ech complexes,
and Delaunay complexes, we obtain the same characterization of critical points
and the same statements about the change of homotopy type of sublevel sets
at critical values.
Our main interest is in the additional structure provided by the interval
partition of the discrete Morse function, and the explicit combinatorial
description of homotopy equivalences induced by a discrete gradient.

This paper extends the collapse of the Delaunay--\v{C}ech to
the Delaunay complex,
which is the main result in an early version of
the present paper \cite{Bauer2014Morse} and has been known
prior to that paper in $\Rspace^2$ only \cite{Chintakunta2013Topology}.
To further include the \v{C}ech complex in the collapsing sequence, we had
to substantially change the proof and
unify the treatment of \v{C}ech and Delaunay functions
in the framework of convex optimization.

The generalization of discrete Morse theory used in the present paper
was suggested in \cite{Freij2009Equivariant}.
The corresponding notion of collapses by intervals in the face poset
has been considered before in \cite{Wegner1975DCollapsing},
and collapses by even more general clusters of simplices have been
considered in \cite{Hersh2005Optimizing,Jonsson2008Simplicial}.
Another popular construction of a geometric complex is
the \emph{Vietoris--Rips complex} \cite{Vietoris1927Uber}.
The resulting filtrations however do not generally come from
a generalized Morse function.

\paragraph{Outline}
Section \ref{sec2} presents background material in combinatorial topology
and discrete Morse theory.
Section \ref{sec3} discusses the \v{C}ech and Delaunay complexes
in the context of a larger family of proximity complexes in Euclidean space.
Section \ref{sec4} generalizes from points to points with weights,
writing all conditions in the language of convex optimization.
Section \ref{sec5} proves the collapsing sequence.
Section \ref{sec6} describes consequences of the collapsing sequence.
Section \ref{sec7} concludes the paper.

\section{Background}
\label{sec2}

All complexes in this paper are simplicial, with vertices from a finite set in 
$\Rspace^n$.
Since we do not restrict the dimension of our simplices,
we will treat them as combinatorial rather than concrete geometric objects.
Assuming the reader is familiar with abstract simplicial complexes,
we give quick reviews of discrete Morse theory \cite{Forman1998Morse} and its generalization.

\paragraph{Discrete Morse theory.}
Given a finite set $X \subseteq \Rspace^n$,
we call a set $Q \subseteq X$ of $q+1$ points a \emph{$q$-simplex}.
Its \emph{dimension} is $q$,
its \emph{faces} are the subsets of $Q$,
and its \emph{facets} are the faces of dimension $q-1$.
A \emph{simplicial complex} is a collection of simplices, $K$,
that is closed under the face relation,
and its \emph{dimension} is the maximum dimension of any of its simplices.
The face relation defines a canonical partial order on $K$,
and the \emph{Hasse diagram}, denoted as $\Hasse{K}$,
is the transitive reduction of this order.
In other words, $\Hasse{K}$ is the directed acyclic graph whose nodes
are the simplices and whose arcs are the pairs $(P, Q)$ in which $P$
is a facet of~$Q$.
A \emph{discrete vector field} is a partition $V$ of $K$
into singleton sets $\{C\}$ and pairs $\{P, Q\}$ corresponding
to arcs $(P,Q)$ in the Hasse diagram.
Suppose now that there is a function $f \colon K \to \Rspace$
that satisfies $f(P) \leq f(Q)$ whenever $P$ is a face of $Q$,
with equality holding in this case iff $(P, Q)$ is a pair in $V$.
Then $f$ is called a \emph{discrete Morse function}
and $V$ is its \emph{discrete gradient}.
Indeed, the existence of $f$ attests for the acyclicity of the directed graph
obtained from the Hasse diagram by contracting the pairs in $V$.
A simplex that does not belong to any pair in $V$ is called a
\emph{critical simplex} and the corresponding value is a
\emph{critical value} of $f$.

To provide an intuition for the concept, we note that the pairs
in a discrete gradient correspond to elementary collapses
\cite[Chapter III]{Edelsbrunner2010Computational}, except that the lower-dimensional
simplex does not have to be free.
An elementary collapse can be realized continuously by
a deformation retraction.
This implies that if we can transform a simplicial complex, $K$,
to another, $K'$, using a sequence of such elementary collapses,
then the two complexes have the same homotopy type. In fact, the implied relation is slightly stronger, which is usually
expressed by saying that $K$ and $K'$ are
\emph{simple-homotopy equivalent} \cite{Cohen1973Course}.
We say that $K$ collapses onto $K'$ and write $K \searrow K'$.
A discrete gradient can encode a collapse \cite{Forman1998Morse}:
\begin{theorem}[Gradient Collapsing Theorem]
  \label{Gradient Collapsing Theorem}
  Let $K$ be a simplicial complex with a discrete gradient $V$,
  and let $K' \subseteq K$ be a subcomplex.
  If $K \setminus K'$ is a union of pairs in $V$, then $K \searrow K'$.
\end{theorem}
We say that the collapse from~$K$ to~$K'$ is \emph{induced by $V$.}

\paragraph{Generalized discrete Morse theory}
To generalize discrete Morse theory
we recall that an \emph{interval}
in the face relation of $K$ is a subset of the form
\begin{align}
  [P,R]  &=  \{ Q \mid P \subseteq Q \subseteq R \} .
\end{align}
The interval is non-empty iff $P$ is a face of $R$.
In this case, the interval contains both simplices
-- which may be the same --
and we refer to $P$ as its \emph{lower bound} and to $R$
as its \emph{upper bound}.
Borrowing from the nomenclature of \cite{Freij2009Equivariant}, we call a partition $W$
of $K$ into intervals a \emph{generalized discrete vector field}.
Indeed, a discrete vector field is the special case in which all intervals
are either singletons or pairs.
Suppose now that there is a function $f \colon K \to \Rspace$
that satisfies $f(P) \leq f(Q)$ whenever $P$ is a face of $Q$,
with equality holding in this case iff $P$ and $Q$ belong to a common interval in $W$.
Then $f$ is called a \emph{generalized discrete Morse function}
and $W$ is its \emph{generalized discrete gradient}.
If an interval contains only one simplex,
then we call the interval \emph{singular},
the simplex a \emph{critical simplex}, and the value of the simplex
a \emph{critical value} of $f$.

It is easy to see that for every generalized discrete gradient,
there is a discrete gradient that refines every non-singular interval
$[P, R]$ into pairs:
choose an arbitrary vertex $x \in R \setminus P$ and partition
$[P, R]$ into pairs $\{Q \setminus \{x\}, Q \cup \{x\}\}$
for all $Q \in [P, R]$.
We call this a \emph{vertex refinement}.
The generalized discrete gradient and its refinement
have the same critical simplices, implying that \theref{Gradient Collapsing Theorem}
also applies to generalized discrete gradients.
The refinement is in general not unique.
\begin{theorem}[Generalized Gradient Collapsing Theorem]
  \label{Generalized Gradient Collapsing Theorem}
  Let $K$ be a simplicial complex with a generalized discrete gradient $V$,
  and let $K' \subseteq K$ be a subcomplex.
  If $K \setminus K'$ is a union of non-singular intervals in $V$,
  then $K \searrow K'$.
\end{theorem}

\section{Proximity complexes}
\label{sec3}

We introduce \v{C}ech and Delaunay complexes as members of the larger
family of selective Delaunay complexes.
After writing these complexes as sublevel sets of real-valued functions,
we introduce the Delaunay--\v{C}ech complexes as subcomplexes
of the Delaunay triangulation.  

\paragraph{\v{C}ech complexes}
Write $\Edist{x}{y}$ for the Euclidean distance between $x, y \in \Rspace^n$.
For $r \geq 0$, let $B_r (x) = \{y \in \Rspace^n \mid \Edist{x}{y} \leq r\}$
be the closed ball of radius $r$ centered at $x \in X$.
The \emph{\v{C}ech complex} of a finite set $X \subseteq \Rspace^n$
for radius $r \geq 0$, 
\begin{align}
  \Cech{r}{X}  &=  \bigg\{ Q \subseteq X
                    \mid \bigcap_{x \in Q} B_r (x) \neq \emptyset \bigg\} ,
\end{align}
is isomorphic to the nerve of the collection of closed balls.
Therefore, by the Nerve Theorem \cite{Borsuk1948Imbedding},
it is homotopy equivalent to the union of the balls, $B_r (X) = \bigcup_{x \in X} B_r (x)$.
For sufficiently large radius, the \v{C}ech complex is the
full (abstract) simplex spanned by $X$,
which we denote as $\Delta (X) = 2^X \setminus \{ \emptyset \}$.
For $r \leq t$, we have $\Cech{r}{X} \subseteq \Cech{t}{X}$,
so the \v{C}ech complexes form a \emph{filtration} of $\Delta (X)$.

\paragraph{Delaunay complexes}
Let $X \subseteq \Rspace^n$ be again finite and $x \in X$.
The \emph{Voronoi domain} of $x$ with respect to $X$,
and the \emph{Voronoi ball} of $x$ with respect to $X$
for a radius $r \geq 0$ are
\begin{align}
  \sVdom{}{x}{X}  &=  \{ y \in \Rspace^n  \mid
    \Edist{y}{x} \leq \Edist{y}{p} \text{ for all } p \in X \},  \\
  \sVdom{r}{x}{X}  &=  B_r(x) \cap \vor{x,X} .
\end{align}
The \emph{Delaunay complex} of $X$ for radius $r \geq 0$,  
\begin{align}
  \Del{r}{X}    =  \bigg\{ Q \subseteq X
                    \mid \bigcap_{x \in Q} \sVdom{r}{x}{X}
                                              \neq \emptyset \bigg\} ,
\end{align}
often referred to as alpha complex,
is isomorphic to the nerve of the collection of Voronoi balls.
For sufficiently large $r$, this is the \emph{Delaunay triangulation}
of $X$, which we denote as $\DelTri{X} = \Del{\infty}{X}$.
For $r \leq t$, we have $\Del{r}{X} \subseteq \Del{t}{X}$,
so the Delaunay complexes form a filtration of the Delaunay triangulation.
If we assume that the points are in general position, then the
Delaunay triangulation is geometrically realized as a simplicial
complex in $\Rspace^n$.
While the assumption of general position is not required
in the abstract setting, we will need it in the construction
of generalized discrete Morse functions.

\paragraph{Delaunay--\v{C}ech complexes}
The \emph{Delaunay--\v{C}ech complex} for radius $r \geq 0$ is the restriction
of the \v{C}ech complex to the Delaunay triangulation.
It contains all simplices
in the Delaunay triangulation such that the balls of radius $r$
centered at the vertices have a non-empty common intersection:
\begin{align}
  \DCech{r}{X}  &=  \bigg\{ Q \in \DelTri{X} \mid
                      \bigcap_{x \in Q} B_r (x) \neq \emptyset \bigg\} .
\end{align}
Similar to the Delaunay complex,
we have $\DCech{r}{X} \subseteq \DCech{t}{X} \subseteq \DelTri{X}$
whenever $r \leq t$, so the Delaunay--\v{C}ech complexes also
form a filtration of the Delaunay triangulation.
While the Delaunay and the Delaunay--\v{C}ech complexes are similar,
they are not necessarily the same,
as illustrated in Figure \ref{fig:complexes}.
Instead of equality, we have
$\Del{r}{X} \subseteq \DCech{r}{X}$ for all $r$.
To see this, we just note that if the Voronoi balls have a non-empty common intersection,
then the balls have a non-empty common intersection
and so do the Voronoi domains.

\paragraph{Selective Delaunay complexes.}
The proof of the main result in this paper makes essential use
of a family of complexes that contain the Delaunay and the 
\v Cech complexes as extremal cases.
To introduce this family, let $X \subseteq \Rspace^n$ be a 
finite set, $E \subseteq X$ a subset,
and $r \geq 0$ a real number.
Note that for a point $x \in X$ we can also consider the \emph{Voronoi ball}
of $x$ with respect to a subset $E \subseteq X$ not necessarily
containing $x$.
Specifically, $\sVdom{r}{x}{E}$ is the set of points $y \in 
\Rspace^n$ whose distance to $x$ is bounded from above by 
$r$ and by the distances to the points in $E$.
The \emph{selective Delaunay complex} for $E \subseteq X$ and $r$
contains all simplices over $X$ whose vertices have Voronoi balls
for the subset $E$
with non-empty common intersection:
\begin{align}
  \sDel{r}{X}{E}  &=  \bigg\{ Q \subseteq X \mid
    \bigcap_{x \in Q} \sVdom{r}{x}{E} \neq \emptyset \bigg\} .
\end{align}
It is isomorphic to the nerve of the set of these Voronoi balls;
see Figure \ref{fig:selectiveDelaunay}.
\begin{figure}[htb]
  \centering 
  \vspace{0.2in}
  \includegraphics[height=.9in]{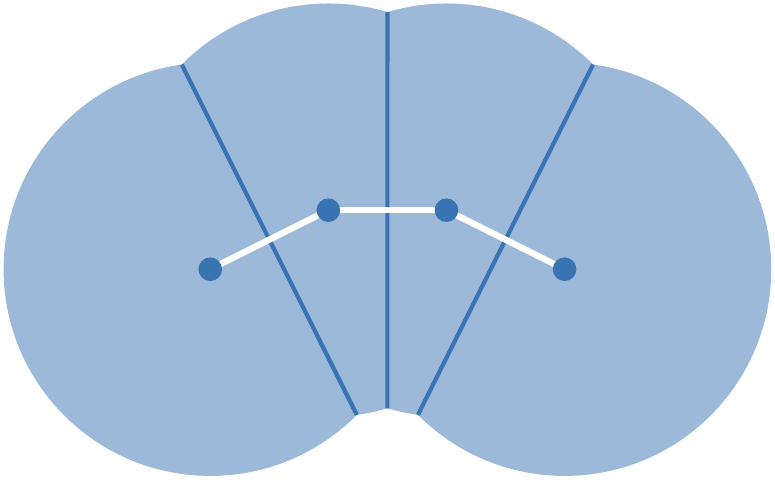}\quad
  \includegraphics[height=.9in]{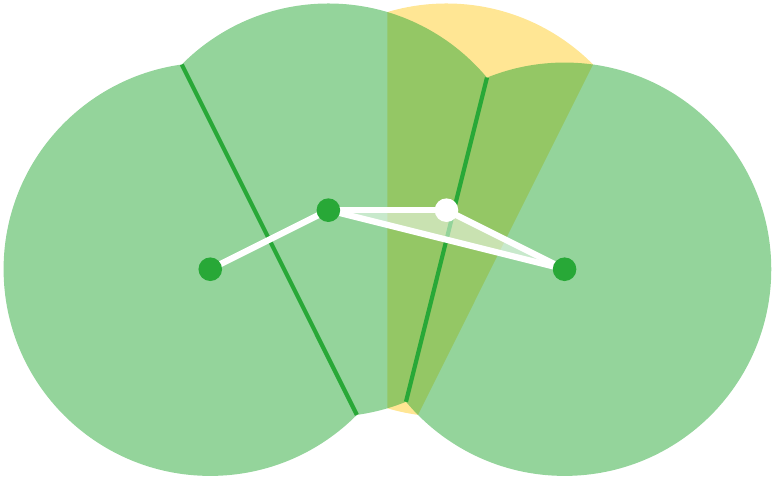}\quad
  \includegraphics[height=.9in]{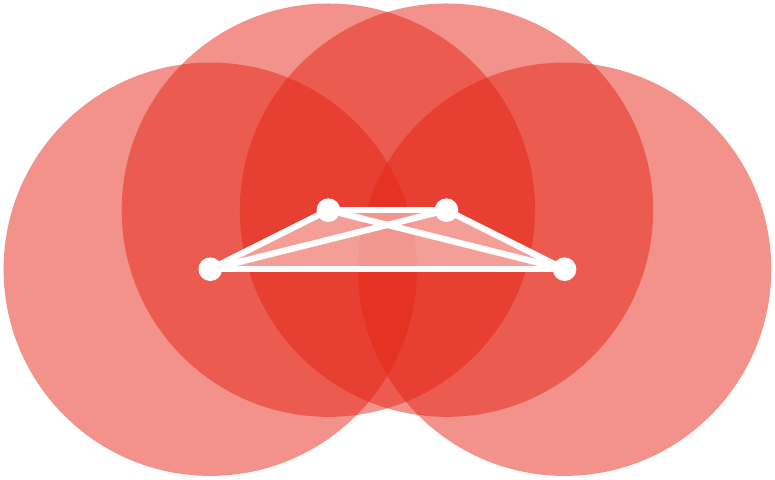}
  \caption{The Voronoi balls of a set $X$ of four points in $\Rspace^2$.
    In the \emph{middle}, three of the balls belong to the subset
    $E \subseteq X$ that impose constraints.
    The corresponding selective Delaunay complex, $\sDel{r}{X}{E}$,
    has four edges and one triangle.
    It properly contains the Delaunay complex, $\Del{r}{X}$,
    shown on the \emph{left}, and it is properly contained in the
    \v{C}ech complex, $\Cech{r}{X}$, shown on the \emph{right}.}
  \label{fig:selectiveDelaunay}
\end{figure}
While individual Voronoi balls depend on $E$, their union does not.
To see this, we note that $\sVdom{r}{x}{F} \subseteq \sVdom{r}{x}{E}$
whenever $E \subseteq F$ for all $x \in X$.
We get the largest Voronoi balls for $E = \emptyset$,
in which case each domain is a ball of radius $r$.
We get the smallest Voronoi balls for $E = X$,
in which case the Voronoi balls form a convex decomposition
of the union of balls.
Since the union does not depend on $E$, the Nerve Theorem \cite{Borsuk1948Imbedding} implies that 
for a given point set $X$ and radius $r$,
all selective Delaunay complexes have the same homotopy type.
Note also that $\sDel{r}{X}{F} \subseteq \sDel{t}{X}{E}$
whenever $r \leq t$ and $E \subseteq F$.
For $E = \emptyset$, the selective Delaunay complex is the \v{C}ech complex:
$\Cech{r}{X} = \sDel{r}{X}{\emptyset}$.
For $E = X$, the selective Delaunay complex is the Delaunay complex:
$\Del{r}{X} = \sDel{r}{X}{X}$.
In analogy to the Delaunay triangulation, we define 
$\sDel{}{X}{E}=\sDel{\infty}{X}{E}$.

\paragraph{Radius functions}
There is an equivalent, dual definition of selective Delaunay complexes,
which is natural from the point of view of discrete Morse theory
and will reveal important structural properties.
To state this definition, consider two point sets
$Q, E \subseteq X \subseteq \Rspace^n$.
We say an $(n-1)$-dimensional sphere $S$
in $\Rspace^n$ \emph{includes} $Q \subseteq X$
if all points of $Q$ lie on or inside $S$,
and it \emph{excludes} $E \subseteq X$ if all points of $E$ lie on or outside $S$.
If $Q$ and $E$ share points, then they necessarily lie on $S$.
The set of such spheres may be empty,
but if it is not, then we define $\DSphere{Q}{E}$ to be the smallest
such sphere, referring to it as the
\emph{Delaunay sphere} of $Q$ with respect to $E$,
and we write $\Dsqradius{Q}{E}$ for its squared radius.
The \emph{radius function} for $E$ maps each simplex
to the squared radius of the Delaunay sphere:
\begin{align}
  \radfun{E}  \colon \sDelTri{X}{E} \to \Rspace 
\end{align}
defined by $\radfun{E} (Q) = \Dsqradius{Q}{E}$.
Considering the special case $E = X$ of Delaunay complexes,
we call $\radfun{X}$ the \emph{Delaunay radius function} of $X$,
and $\DSphere{Q}{X}$ the \emph{Delaunay sphere} of $Q$,
which is commonly referred to as the \emph{smallest empty circumsphere} of $Q$.
Similarly, for the special case $E = \emptyset$ of \v Cech complexes,
we call $\radfun{\emptyset}$ the \emph{\v Cech radius function} of $X$,
and $\DSphere{Q}{\emptyset}$ the \emph{\v Cech sphere} of $Q$,
which is commonly referred to as the \emph{smallest enclosing sphere} of $Q$.

It is not difficult to see that $Q$ belongs to the selective Delaunay complex
for radius $r$ iff its value under the radius function exists
and does not exceed $r^2$.
For completeness, we present the formal claim with proof.
\begin{lemma}[Radius Function Lemma]
  \label{Radius Function Lemma}
  Let $X \subseteq \Rspace^n$ be finite, $E \subseteq X$, and $r \geq 0$.
  A simplex $Q \in \sDel{}{X}{E}$ belongs to $\sDel{r}{X}{E}$
  iff $\radfun{E} (Q) \leq r^2$.
\end{lemma}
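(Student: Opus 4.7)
The plan is to unwind both sides of the equivalence into the same geometric statement: the existence of a center $z$ that is equidistant from, or farther from, the points of $E$ than from the points of $Q$, with the distance to $Q$ bounded by $r$. Concretely, I would argue both directions by constructing such a $z$, with the sphere $\DSphere{Q}{E}$ on one side and a witness to the non-empty intersection of Voronoi balls on the other.

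For the direction $\radfun{E}(Q) \le r^2 \Rightarrow Q \in \sDel{r}{X}{E}$, I would let $z$ be the center and $\rho = \sqrt{\radfun{E}(Q)} \le r$ the radius of the Delaunay sphere $\DSphere{Q}{E}$. Since $\DSphere{Q}{E}$ includes $Q$, every $x \in Q$ satisfies $\Edist{z}{x} \le \rho \le r$, so $z \in B_r(x)$. Since $\DSphere{Q}{E}$ excludes $E$, every $p \in E$ satisfies $\Edist{z}{p} \ge \rho \ge \Edist{z}{x}$ for each $x \in Q$, so $z \in \vor{x,E}$. Hence $z \in \bigcap_{x\in Q} \sVdom{r}{x}{E}$, proving $Q \in \sDel{r}{X}{E}$.

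For the converse, I would pick any $z \in \bigcap_{x \in Q}\sVdom{r}{x}{E}$ and set $\rho = \max_{x \in Q}\Edist{z}{x}$. Since $z \in B_r(x)$ for each $x \in Q$, we get $\rho \le r$. Since $z \in \vor{x,E}$ for every $x \in Q$, each $p \in E$ satisfies $\Edist{z}{p} \ge \Edist{z}{x}$ for every $x \in Q$, hence $\Edist{z}{p} \ge \rho$. The sphere of radius $\rho$ centered at $z$ therefore includes $Q$ and excludes $E$, and by the minimality of $\DSphere{Q}{E}$ we conclude $\radfun{E}(Q) \le \rho^2 \le r^2$.

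I don't expect any real obstacle; the lemma is essentially a translation between two descriptions of the same geometric object, and the only thing to watch is the boundary case of points shared between $Q$ and $E$, which is handled automatically since the inequalities in the definitions of Voronoi domains and of ``includes/excludes'' are non-strict and force such points to lie exactly on $\DSphere{Q}{E}$. One should also note that the hypothesis $Q \in \sDel{}{X}{E}$ is needed only to guarantee that $\DSphere{Q}{E}$, and thus $\radfun{E}(Q)$, is defined; in the forward direction this is implied by the assumption $\radfun{E}(Q) \le r^2$, while in the reverse direction it follows from $\sDel{r}{X}{E} \subseteq \sDel{}{X}{E}$.
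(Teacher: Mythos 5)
Your proposal is correct and follows essentially the same argument as the paper: one direction takes a witness point in $\bigcap_{x\in Q}\sVdom{r}{x}{E}$ and centers a sphere there that includes $Q$ and excludes $E$, and the other uses the center of $\DSphere{Q}{E}$ as a point of every Voronoi ball. The only (harmless) deviation is your choice of radius $\rho=\max_{x\in Q}\Edist{z}{x}$ where the paper takes the maximum distance to points of $Q\cap E$; your choice works just as well and even sidesteps the degenerate case $Q\cap E=\emptyset$.
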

\begin{proof}
  Suppose $Q \in \sDel{r}{X}{E}$,
  consider the Voronoi balls of its vertices with respect to $E$,
  and let $y$ be a common point of these balls.
  Let $\rho$ be the maximum distance between $y$ and any
  point in $Q \cap E$. By construction, the sphere with center~$y$ 
  and radius~$\rho$ includes $Q$ and excludes $E$. 
  We have $\radfun{E} (Q) \leq \rho^2 \leq r^2$
  since $\radfun{E} (Q)$ is the squared radius
  of the smallest such sphere.

  Conversely, if $\radfun{E} (Q) \leq r^2$, then the center of the smallest
  sphere that includes $Q$ and excludes $E$ belongs to the
  Voronoi ball of every point in $Q$ with respect to $E$,
  which implies $Q \in \sDel{r}{X}{E}$.
\end{proof}

We observed earlier that $\sDel{r}{X}{F} \subseteq \sDel{r}{X}{E}$
whenever $E \subseteq F$.
Correspondingly, we have $\radfun{E} (Q) \leq \radfun{F} (Q)$
whenever both are defined, which  is clear because $E$ generates fewer constraints than $F$
and therefore allows for a radius that is smaller than or equal
to the smallest radius we get for $F$.

\section{Convex Optimization}
\label{sec4}

Assuming the points are in general position, we prove that the
radius functions defined in Section \ref{sec3} are
generalized discrete Morse functions.
For this purpose, we translate the geometric constructions into
the language of convex optimization.
In this setting, there is little difference between
points and weighted points, so we generalize all results to weighted points.

\paragraph{Weighted points.}
Using \theref{Radius Function Lemma}, we can determine whether or not
a simplex $Q$ belongs to $\sDel{r}{X}{E}$ by solving a convex
optimization problem:
$\DSphere{Q}{E}$ is the sphere with center $z$ and radius $r \leq 0$ that
\begin{align}
  \underset{r,z}{\text{minimizes~~}} & r^2 \\
  \text{subject to~~} & \Edist{z}{q}^2 \leq r^2, ~~\forall q \in Q, \\
                    & \Edist{z}{e}^2 \geq r^2, ~~\forall e \in E.
\end{align}
We generalize the setting to allow for points $x \in \Rspace^n$
with weight $w_x \in \Rspace$, a concept well known
for Voronoi diagrams and Delaunay triangulations,
whose weighted versions are sometimes referred to as
\emph{power diagrams} \cite{Aurenhammer1987Power}
and \emph{regular triangulations} \cite{Gelfand1994Discriminants}.
To explain, we substitute $s = r^2$, allowing $s$ to be negative as well.
Appealing to geometric intuition, we speak of a sphere $S$ with
squared radius $s$ and center $z$ nonetheless.
We say $S$ \emph{includes} a point $x$ with weight $w_x$
if $\Edist{z}{x}^2 \leq s + w_x$,
and $S$~\emph{excludes}~$x$ if $\Edist{z}{x}^2 \geq s + w_x$.
Similarly, $x$ \emph{lies on} $S$ if it is simultaneously included and excluded.
With this extension of the relations,
we can read everything we said about spheres and points
as statements about spheres and weighted points.
To obtain an intuitive geometric interpretation of the weight,
we consider the sphere $S_x$ with center $x$ and positive
squared radius $w_x$.
The weighted point~$x$ then lies on~$S$ iff the two spheres~$S$ and $S_x$
intersect orthogonally.
Similarly, $S$ includes $x$ iff the two spheres are orthogonal
or closer to each other than orthogonal,
and $S$ excludes $x$ iff the two spheres are orthogonal
or further from each other than orthogonal.

With the new notation, we rewrite the convex optimization problem
so it applies to the more general, weighted setting:
\begin{align}
  \underset{s,z}{\text{minimize~~}} & s                        
  \label{eqn:G1} 
  \\
  \text{subject to~~}  & \Edist{z}{q}^2 \leq s + w_q, ~~\forall q \in Q,
  \label{eqn:G2} 
                                                               \\
                     & \Edist{z}{e}^2 \geq s + w_e, ~~\forall e \in E.
  \label{eqn:G3}
\end{align}
This effectively generalizes the notion of selective Delaunay complexes
from equal sized balls to sets in which balls can have different sizes.
Such more general data occurs in a number of applications,
including the modeling of biomolecules.

\paragraph{Karush--Kuhn--Tucker conditions.}
In the next step, we reformulate the optimization problem so that
the objective function is convex and the constraints are affine.
Before we get there, consider a general optimization problem:
\begin{align}
  \underset{y}{\text{minimize}} & ~~f(y) \\
  \text{subject to} & ~~g_j (y) \leq 0, ~~\forall j \in J,\\
  & ~~g_k (y) = 0, ~~\forall k \in K,\\
  & ~~g_l (y) \geq 0, ~~\forall l \in L,
\end{align}
in which $J, K, L$ are pairwise disjoint index sets.
Assuming $f$ is convex and the $g_i$ are affine,
the \emph{Karush--Kuhn--Tucker conditions} say that a feasible point $y$
is an optimal solution iff there exist coefficients $\lambda_i \in \Rspace$
for all $i \in I = J \cup K \cup L$ such that
\begin{align}
  \nabla f(y) + \sum_{i \in I} \lambda_i \nabla g_i (y)  &=  0 , \label{eqn:A1} \tag{stationarity}\\
  \lambda_i g_i (y)  &=  0, ~~\forall i \in I,              \label{eqn:A2} \tag{complementary slackness}\\
  \begin{split}
  \lambda_j  &\geq  0, ~~\forall j \in J,                    \\
  \lambda_l  &\leq  0, ~~\forall l \in L;                 
  \end{split} \label{eqn:A3}\tag{dual feasibility}
\end{align}
see \cite{Boyd2004Convex}.
In particular, necessity is provided by the linearity of the
constraints \cite[p.~226]{Boyd2004Convex},
while sufficiency is provided by the convexity of the objective function
and the inequality constraints \cite[p.~244]{Boyd2004Convex}.
To get our problem into this form, we introduce $a = \norm{z}^2 - s$,
write $y = (z, a)$, set $K=Q \cap E$, $J=Q\setminus E$, $L=E\setminus Q$,
and define
\begin{alignat}{2}
  f(y)    &=   s &&=  \norm{z}^2 - a ,                              \\
  g_x(y)  &=   \|z - x\|^2 - s - w_x &&=  - 2 \scalprod{z}{x} + a + \norm{x}^2 , \quad\forall x \in Q \cup E.
\end{alignat}
Noting that a point in $K = Q \cap E$ gives rise to two inequalities
that combine to an equality constraint,
we see that the thus defined optimization problem is equivalent
to the original optimization problem \eqref{eqn:G1}, \eqref{eqn:G2}, \eqref{eqn:G3}.
The gradients are
\begin{align}
  \nabla f(y)    &=  (2 z,   -1 ) ,    \\
  \nabla g_x(y)  &=  (  -2 x,   1 ) .
\end{align}
The stationarity condition is equivalent to writing $z$ as an affine combination
of the points in $Q\cup E$, with non-negative coefficients
for the points in $Q\setminus E$ and non-positive coefficients for the points in $E\setminus Q$.
The complementary slackness condition translates into vanishing coefficients
whenever the inequality is strict; that is: whenever the point does
not lie on the sphere.
We can now specialize the general conditions.

\begin{theorem}[Special KKT Conditions]
  \label{Special KKT Conditions}
  Let $S$ be a sphere that includes $Q \subseteq X$
  and excludes $E \subseteq X$.
  Then $S$ is the smallest such sphere iff its center
  is an affine combination of the points $x \in Q \cup E$, 
  \[ z = \sum \lambda_x x \quad\text{with}\quad 1 = \sum \lambda_x,\]
  such that
  \begin{enumerate}[(I)]
    \item $\lambda_x = 0$ whenever $x$ does not lie on $S$,
    \item $\lambda_x \geq 0$ whenever $x \in Q \setminus E$, and
    \item $\lambda_x \leq 0$ whenever $x \in E \setminus Q$.
  \end{enumerate}
\end{theorem}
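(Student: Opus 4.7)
The plan is to apply the general KKT conditions already recorded above to the reformulated optimization problem with variable $y=(z,a)$, affine constraints $g_x$, and objective $f(y) = \norm{z}^2 - a$. Since $f$ is convex (it is the sum of the convex function $\norm{z}^2$ and the linear function $-a$) and every $g_x$ is affine, the KKT conditions are both necessary and sufficient for optimality, as noted in the excerpt via the cited pages of \cite{Boyd2004Convex}. A sphere $S$ with center $z$ and squared radius $s$ that includes $Q$ and excludes $E$ corresponds precisely to a feasible point of this problem, and the smallest such sphere corresponds to the optimum; so the task reduces to translating each KKT clause into the geometric language of the theorem.

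First I would expand the stationarity condition using the gradients $\nabla f(y) = (2z,-1)$ and $\nabla g_x(y) = (-2x,1)$ computed above. The component along $z$ yields $2z - 2\sum_x \lambda_x x = 0$, hence $z = \sum_x \lambda_x x$, while the component along $a$ yields $-1 + \sum_x \lambda_x = 0$, hence $\sum_x \lambda_x = 1$. Together these two equations say exactly that $z$ is an affine combination of the points in $Q \cup E$ with coefficients $\lambda_x$.

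Next I would unpack complementary slackness: since $g_x(y) = \norm{z-x}^2 - s - w_x$ vanishes precisely when $x$ lies on $S$, the identity $\lambda_x g_x(y) = 0$ forces $\lambda_x = 0$ whenever $x$ does not lie on $S$, which is condition~(I). Finally, dual feasibility yields (II) and (III) directly, once we recall the index sets fixed above: $J = Q \setminus E$ indexes the $\leq$-constraints, whose multipliers are non-negative, giving (II); and $L = E \setminus Q$ indexes the $\geq$-constraints, whose multipliers are non-positive, giving (III). Points in $K = Q \cap E$ correspond to equality constraints with unrestricted sign, and they automatically lie on $S$, so their coefficients satisfy (I) vacuously and are subject to no sign constraint, as the statement allows.

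Since everything is a direct unpacking of an equivalence that has already been set up, there is no real obstacle in the proof; the only care needed is with signs, namely correctly identifying which subset of $Q \cup E$ plays the role of $J$ and which plays the role of $L$, and making sure the $a$-component of stationarity is not overlooked since it is what produces the affine-combination normalization $\sum_x \lambda_x = 1$ rather than a mere linear-combination identity.
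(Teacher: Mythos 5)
Your proposal is correct and matches the paper's own argument, which likewise obtains the theorem by specializing the general KKT conditions to the reformulated problem in $y=(z,a)$ with $f(y)=\norm{z}^2-a$, affine $g_x$, and index sets $J=Q\setminus E$, $K=Q\cap E$, $L=E\setminus Q$, reading off the affine combination from stationarity, condition (I) from complementary slackness, and (II)--(III) from dual feasibility. The sign bookkeeping and the role of the $a$-component in producing $\sum_x\lambda_x=1$ are handled exactly as in the paper.
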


\paragraph{Combinatorial formulation}
We are almost ready to prove the important technical result that the radius functions
are generalized discrete Morse functions provided the points
are in general position.
We begin by formalizing the latter condition.
A \emph{circumsphere} of a set $P \subseteq \Rspace^n$ is an $(n-1)$-sphere
such that all points of $P$ lie on the sphere.
A sufficient condition for the existence of a circumsphere is that
$P$ be affinely independent.
For sets of $n$ or fewer points, the circumsphere is not unique,
but by \theref{Special KKT Conditions}, there is only one \emph{smallest circumsphere},
namely the one whose center lies in the affine hull of the points.
We formulate conditions under which only the minimum number of points
lie on any smallest circumsphere.
\begin{definition}[General Position Assumption]
  \label{General Position Assumption}
  A finite set $X$ in $\Rspace^n$ is in \emph{general position}
  if for every $P \subseteq X$ of at most $n+1$ points
  \begin{enumerate}[(a)]
    \item $P$ is affinely independent, and
    \item no point of $X \setminus P$ lies on the smallest circumsphere
      of $P$.
  \end{enumerate}
\end{definition}
The formulation applies to points as well as to points with weights.
We therefore assume the more general case in which $X$ is a finite
set of weighted points in general position in $\Rspace^n$.
Let $S$ be an $(n-1)$-sphere,
write $\Incl{S}, \Excl{S} \subseteq X$ for the subsets of
included and excluded points,
and set $\On{S} = \Incl{S} \cap \Excl{S}$.
Now assume that $S$ is the smallest circumsphere of some set $P$,
that is: the center $z$ of $S$ lies in the affine hull of $P$,
and $P = \On{S}$ by general position.  We have
\[z = \sum_{x \in \On{S}} \rho_x x \quad\text{with}\quad 1 = \sum_{x \in \On{S}} \rho_x.\]
By general position, the affine combination is unique,
and $\rho_x \neq 0$ for all $x \in \On{S}$.
We call
\begin{align}
  \Front{S}  &=  \{ x \in \On{S}  \mid  \rho_x > 0 \} , \\
  \Back{S}   &=  \{ x \in \On{S}  \mid  \rho_x < 0 \} 
\end{align}
the \emph{front face} and the \emph{back face} of $\On{S}$, respectively.
We have $\Back{S} = \emptyset$ iff the circumcenter $z$ is contained
in the convex hull of $\On{S}$.
Using these definitions, we now give a combinatorial version
of the Karush--Kahn--Tucker conditions.
\begin{theorem}[Combinatorial KKT Conditions]
\label{Combinatorial KKT Conditions}
  Let $X$ be a finite set of weighted points in general position in $\Rspace^n$.
  Let $Q, E \subseteq X$ for which there exists a sphere $S$
  with $Q \subseteq \Incl{S}$ and $E \subseteq \Excl{S}$.
  It is the smallest such sphere, $S = \DSphere{Q}{E}$, iff
  \begin{enumerate}[(i)]
    \item $S$ is the smallest circumsphere of $\On{S}$,
    \item $\Front{S} \subseteq Q$, and
    \item $\Back{S} \subseteq E$.
  \end{enumerate}
\end{theorem}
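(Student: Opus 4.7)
The plan is to use \theref{Special KKT Conditions} as the bridge between the geometric characterization of $\DSphere{Q}{E}$ and the affine-combination data that define $\Front{S}$ and $\Back{S}$. The key extra ingredient from general position (already noted in the text preceding the statement) is that the representation $z = \sum_{x\in\On{S}} \rho_x x$ is unique and no coefficient $\rho_x$ vanishes.

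For the forward direction, I would assume $S = \DSphere{Q}{E}$ and extract KKT coefficients $(\lambda_x)_{x\in Q\cup E}$ with $z = \sum \lambda_x x$, $\sum \lambda_x = 1$, vanishing outside $\On{S}$, nonnegative on $Q\setminus E$, and nonpositive on $E\setminus Q$. Because $z$ is then an affine combination of points in $\On{S}$, it lies in $\aff(\On{S})$, and since $S$ is a circumsphere through $\On{S}$, this yields (i): $S$ is the smallest circumsphere of $\On{S}$. Extending $\lambda$ by $0$ on $\On{S}\setminus(Q\cup E)$ and comparing with the unique representation in $\On{S}$ identifies $\lambda_x$ with $\rho_x$ on $\On{S}$; since no $\rho_x$ vanishes, this forces $\On{S}\subseteq Q\cup E$. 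A point $x\in\Front{S}$ then has $\lambda_x = \rho_x > 0$, which by the KKT sign condition excludes $x\in E\setminus Q$, leaving $x\in Q$; this is (ii), and (iii) is symmetric.

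For the reverse direction, I would assume (i)--(iii), define $\lambda_x := \rho_x$ for $x\in\On{S}$ and $\lambda_x := 0$ for $x\in(Q\cup E)\setminus\On{S}$ (well defined because $\On{S} = \Front{S}\cup\Back{S}\subseteq Q\cup E$ by (ii) and (iii)), and verify the KKT conditions. Stationarity and normalization are inherited from (i); complementary slackness is built into the definition; and dual feasibility follows by contradiction, since $\lambda_x<0$ for some $x\in Q\setminus E$ would place $x$ in $\Back{S}\subseteq E$, and symmetrically for $E\setminus Q$. \theref{Special KKT Conditions} then certifies that $S$ is the smallest sphere including $Q$ and excluding $E$, i.e., $S = \DSphere{Q}{E}$.

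The main technical obstacle is the identification step in the forward direction: one has to argue that the KKT coefficients, which a priori are supported on $(Q\cup E)\cap\On{S}$, coincide with the purely geometric coefficients $\rho_x$ on the potentially larger set $\On{S}$, and then upgrade this to the inclusion $\On{S}\subseteq Q\cup E$. This step rests essentially on the nonvanishing of all $\rho_x$ under the general position assumption; once that is granted, the remainder is disciplined sign bookkeeping via the dual feasibility conditions.
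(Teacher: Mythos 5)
Your proof is correct and takes essentially the same route as the paper: both directions are reduced to \theref{Special KKT Conditions}, with the general-position facts (affine independence of $\On{S}$, uniqueness of the affine representation of the center, and $\rho_x \neq 0$) used to identify the KKT multipliers with the coefficients $\rho_x$ and to force $\On{S} \subseteq Q \cup E$. The identification step you spell out in the forward direction is exactly what the paper compresses into the assertion $\On{S} = \{x \in Q \cup E \mid \lambda_x \neq 0\}$, and your reverse direction (extend by zero, check the sign conditions via (ii) and (iii)) matches the paper's argument.
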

\begin{proof}
  We first show that \theref{Special KKT Conditions} imply
  \theref{Combinatorial KKT Conditions}.
  The center of $S$ lies in the affine hull of 
  $P = \{x \in Q \cup E \mid \lambda_x \neq 0\}$,
  and Condition (I) implies that $P \subseteq \On{S}$
  which in turn implies~(i).
  By general position, we have
  $\On{S} = P$, and
  we can apply the definition of front and back face,
  letting $\rho_x = \lambda_x$ for all $x \in \On{S}$.
  Condition (II) now 
  says that $x \not\in E$ implies $\rho_x \geq 0$,
  or, equivalently, that $\rho_x < 0$ implies $x \in E$.
  Hence, $\Back{S} \subseteq E$.
  Similarly, Condition (III)
  yields $\Front{S} \subseteq Q$.
  
  We next show that Conditions (i) to (iii) imply (I) to (III).
  First note that (ii) and (iii) imply 
  $\On{S} \subseteq Q \cup E$.
  We define $\lambda_x = \rho_x$ for all $x \in \On{S}$,
  and $\lambda_x = 0$ for $x \in (Q \cup E) \setminus \On{S}$.
  Now (I) is satisfied by construction,
  and the inclusion $\Back{S} \subseteq E$ implies (II),
  while the inclusion $\Front{S} \subseteq Q$ implies (III).
\end{proof}

\paragraph{Partition into intervals}
Fix $E \subseteq X$ and recall that $\radfun{E}$ maps each simplex
$Q \in \sDelTri{X}{E}$ to the squared radius of $S = \DSphere{Q}{E}$.
This implies $\radfun{E} (P) = \radfun{E} (Q)$
for all $P \in [\Front{S}, \Incl{S}]$.
To prove that $\radfun{E}$ is a generalized discrete Morse function,
it remains to show that $\radfun{E} (P) < \radfun{E} (Q)$ whenever
$P \subseteq Q$ do not belong to the same interval.
But this is clear from \theref{General Position Assumption}.
\begin{theorem}[Selective Delaunay Morse Function Theorem]
  \label{Selective Delaunay Morse Function Theorem}
  Let $X$ be a finite set of weighted points in general position in $\Rspace^n$,
  and $E \subseteq X$.
  Then the radius function, $\radfun{E} \colon \sDelTri{X}{E} \to \Rspace$,
  is a generalized discrete Morse function
  whose discrete gradient consists of the intervals $[\Front{S}, \Incl{S}]$
  over all Delaunay spheres $S = \DSphere{Q}{E}$ with $Q \in \sDelTri{X}{E}$.
\end{theorem}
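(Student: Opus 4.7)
The plan is to verify three facts: the proposed intervals $[\Front{S}, \Incl{S}]$ partition the selective Delaunay triangulation $\sDelTri{X}{E}$; the radius function $\radfun{E}$ is constant on each such interval; and $\radfun{E}$ is strictly increasing along face relations that cross between intervals. Together these say exactly that $\radfun{E}$ is a generalized discrete Morse function with the claimed gradient.

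First I would observe that the partition and constancy statements are almost immediate consequences of \theref{Combinatorial KKT Conditions}. For any $Q \in \sDelTri{X}{E}$, set $S = \DSphere{Q}{E}$; the KKT conditions give $\Front{S} \subseteq Q \subseteq \Incl{S}$, so $Q$ lies in the interval $[\Front{S}, \Incl{S}]$ attached to $S$. The key reobservation is that this interval is the same for every simplex $P$ it contains: the sphere $S$ remains a candidate for the pair $(P, E)$, since $P \subseteq \Incl{S}$ and $E$ is unchanged, and the three conditions of \theref{Combinatorial KKT Conditions} still certify optimality---condition (i) is intrinsic to $S$, condition (ii) $\Front{S} \subseteq P$ is built into the definition of the interval, and condition (iii) $\Back{S} \subseteq E$ does not involve $Q$. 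Hence $\DSphere{P}{E} = S$, which simultaneously shows that $\radfun{E}$ takes the squared radius of $S$ on the whole interval and that the intervals are pairwise disjoint, since any common simplex forces the two corresponding Delaunay spheres to coincide.

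For the strict inequality, let $P \subsetneq Q$ lie in different intervals, so $\DSphere{P}{E} \neq \DSphere{Q}{E}$. Monotonicity $\radfun{E}(P) \leq \radfun{E}(Q)$ is immediate from feasibility: any sphere that includes $Q$ also includes $P$, so $\DSphere{Q}{E}$ is admissible in the optimization problem defining $\DSphere{P}{E}$. If equality held, $\DSphere{Q}{E}$ would in fact achieve the minimum for $(P,E)$; since the convex reformulation in Section~\ref{sec4} has objective $\|z\|^2 - a$, which is strictly convex in the center $z$, the minimizing sphere is unique, and we would obtain $\DSphere{Q}{E} = \DSphere{P}{E}$, a contradiction. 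Therefore $\radfun{E}(P) < \radfun{E}(Q)$, as required.

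The main obstacle, conceptually, is recognizing why a single sphere $S$ serves as the Delaunay sphere for every simplex in its interval; once this is unpacked from \theref{Combinatorial KKT Conditions}, partition and constancy are two facets of the same observation, and the strict inequality reduces to the standard uniqueness of the minimizer of a strictly convex program. The general position hypothesis is already absorbed into \theref{Combinatorial KKT Conditions} through the well-definedness of $\Front{S}$ and $\Back{S}$, so no additional geometric nondegeneracy input is needed here.
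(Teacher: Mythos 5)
Your proof is correct and follows essentially the same route as the paper: the partition into intervals $[\Front{S},\Incl{S}]$ and the constancy of $\radfun{E}$ on each interval are read off from \theref{Combinatorial KKT Conditions}, exactly as in the paper's ``Partition into intervals'' paragraph. The only divergence is in one step's justification: where the paper attributes strictness of $\radfun{E}(P)<\radfun{E}(Q)$ across distinct intervals to \theref{General Position Assumption}, you derive it from uniqueness of the optimal sphere (strict convexity of $\|z\|^2-a$ in the center $z$), which is a valid and arguably more explicit argument for the same fact.
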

Setting $E = \emptyset$, we have $\Front{S} = \On{S}$
because $\Back{S} = \emptyset$ by \theref{Combinatorial KKT Conditions}.
Symmetrically, for $E = X$, we have $\Incl{S} = \On{S}$.
This implies the following two special cases of the above theorem.
\begin{corollary}[\v Cech Morse Function Corollary]
  The \v Cech radius function of a finite set of weighted points
  in general position is a generalized discrete Morse function.
  Its gradient consists of the intervals $[\On{S}, \Incl{S}]$
  over all \v Cech spheres~$S$ of~$X$.
\end{corollary}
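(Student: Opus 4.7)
The plan is to derive this corollary directly from \theref{Selective Delaunay Morse Function Theorem} by specializing to $E = \emptyset$. First I would observe that when there are no excluded points, every nonempty $Q \subseteq X$ admits a smallest enclosing sphere, so $\sDelTri{X}{\emptyset} = \Delta(X)$ and the function $\radfun{\emptyset}$ is defined on the full simplex. Then the theorem immediately yields that $\radfun{\emptyset}$ is a generalized discrete Morse function, and it only remains to rewrite the gradient intervals in the form stated.

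The key step is to identify $\Front{S}$ with $\On{S}$ for every \v{C}ech sphere $S = \DSphere{Q}{\emptyset}$. For this I would invoke \theref{Combinatorial KKT Conditions}: condition (iii) gives $\Back{S} \subseteq E = \emptyset$, hence $\Back{S} = \emptyset$. Since $X$ is in general position, every coefficient $\rho_x$ with $x \in \On{S}$ is nonzero, so $\On{S} = \Front{S} \sqcup \Back{S}$, and consequently $\Front{S} = \On{S}$. Substituting this identification into the intervals $[\Front{S}, \Incl{S}]$ produced by \theref{Selective Delaunay Morse Function Theorem} yields precisely the claimed intervals $[\On{S}, \Incl{S}]$.

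There is really no obstacle here beyond bookkeeping, since all the substantive work is packaged in the selective theorem and the KKT characterization. If anything, the one small point worth flagging in the write-up is the geometric reading of $\Back{S} = \emptyset$, namely that the center of a smallest enclosing sphere always lies in the convex hull of its on-sphere points — a familiar fact consistent with the specialization we just derived.
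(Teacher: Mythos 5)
Your proposal is correct and follows essentially the same route as the paper: specialize the Selective Delaunay Morse Function Theorem to $E=\emptyset$ and use condition (iii) of the Combinatorial KKT Conditions to conclude $\Back{S}=\emptyset$, hence $\Front{S}=\On{S}$, which turns the intervals $[\Front{S},\Incl{S}]$ into $[\On{S},\Incl{S}]$. The paper's argument is exactly this specialization, stated in one line before the corollary.
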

\begin{corollary}[Delaunay Morse Function Corollary]
  The Delaunay radius function of a finite set of weighted points
  in general position is a generalized discrete Morse function.
  Its gradient consists of the intervals $[\Front{S}, \On{S}]$
  over all Delaunay spheres~$S$ of~$X$.
\end{corollary}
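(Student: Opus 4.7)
The plan is to deduce this corollary directly from \theref{Selective Delaunay Morse Function Theorem} by specializing to $E = X$. Since $\sDelTri{X}{X} = \DelTri{X}$ and the Delaunay radius function is by definition $\radfun{X}$, the main theorem already tells me that $\radfun{X}$ is a generalized discrete Morse function whose gradient intervals have the form $[\Front{S}, \Incl{S}]$ over all Delaunay spheres $S = \DSphere{Q}{X}$. So the only thing left to verify is that, in this special case, the upper bound $\Incl{S}$ coincides with $\On{S}$.

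To check this, I would argue as follows. Each Delaunay sphere $S = \DSphere{Q}{X}$ is, in particular, a sphere excluding $E = X$, so $X \subseteq \Excl{S}$ by hypothesis. Any point $x \in \Incl{S}$ is automatically in $X$, and hence also in $\Excl{S}$, which forces $x \in \Incl{S} \cap \Excl{S} = \On{S}$. Thus $\Incl{S} \subseteq \On{S}$, and the reverse inclusion is trivial, giving $\Incl{S} = \On{S}$. Substituting this identity into the intervals produced by the theorem yields the claimed form $[\Front{S}, \On{S}]$.

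There is no real obstacle beyond bookkeeping; the content is essentially that taking $E = X$ forbids any point from lying strictly inside $S$, which is the defining property of a smallest empty circumsphere. I would phrase the proof as a single short paragraph, citing \theref{Selective Delaunay Morse Function Theorem} for the Morse-function property and using the inclusion $X \subseteq \Excl{S}$ (equivalently, the fact that $\Back{S} \subseteq E = X$ in \theref{Combinatorial KKT Conditions}) to collapse $\Incl{S}$ to $\On{S}$.
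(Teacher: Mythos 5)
Your argument is correct and essentially the paper's own: the paper also specializes \theref{Selective Delaunay Morse Function Theorem} to $E = X$ and replaces $\Incl{S}$ by $\On{S}$, which holds exactly as you say because $\DSphere{Q}{X}$ excludes all of $X$, so every included point lies on $S$. One small caveat: your parenthetical ``equivalently, $\Back{S} \subseteq E = X$ in \theref{Combinatorial KKT Conditions}'' is not a valid substitute, since that condition is vacuous for $E = X$; the justification that actually works is the one you give first, $X \subseteq \Excl{S}$ by the definition of $\DSphere{Q}{X}$.
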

Another straightforward consequence of \theref{Combinatorial KKT Conditions}
is the invariance of the critical simplices.
To state this theorem, we call $Q \in \DelTri{X}$ 
a \emph{centered Delaunay simplex} if 
the center of $S = \DSphere{Q}{X}$ is contained 
in the convex hull of $Q$.
Equivalently, we have
$\Front{S} = \On{S} = \Incl{S}$.
Note that in this case $S = \DSphere{Q}{E}$ for all sets $E \subseteq X$.
\begin{corollary}[Critical Simplex Corollary]
  \label{Critical Simplex Corollary}
  Let $X$ be a finite set of weighted points in general position in $\Rspace^n$.
  Independent of $E$, a subset $Q \subseteq X$ is a critical simplex of $\radfun{E}$
  iff $\Dsqradius{Q}{\emptyset} = \Dsqradius{Q}{E} = \Dsqradius{Q}{X}$
  iff~$Q$~is a centered Delaunay simplex.
\end{corollary}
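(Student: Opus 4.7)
The approach is to reduce all three conditions of the corollary to a common geometric assertion about the sphere $S := \DSphere{Q}{E}$, namely $\Front{S} = Q = \On{S} = \Incl{S}$. By \theref{Selective Delaunay Morse Function Theorem}, $Q$ is a critical simplex of $\radfun{E}$ exactly when its gradient interval $[\Front{S}, \Incl{S}]$ is a singleton; and \theref{Combinatorial KKT Conditions} sandwich $\Front{S} \subseteq Q \subseteq \Incl{S}$, so this singleton condition is equivalent to the three-set collapse above, which in particular forces $\Back{S} = \emptyset$.

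For the forward direction, suppose this collapse holds for some $E$. The content of \theref{Combinatorial KKT Conditions} splits naturally: conditions (i) and (ii) involve only $S$ and $Q$, not $E$, whereas only condition (iii), $\Back{S} \subseteq E$, depends on $E$, and since $\Back{S} = \emptyset$ this holds for every choice of $E$, in particular for $E = \emptyset$ and $E = X$. Applying \theref{Combinatorial KKT Conditions} in reverse we conclude $S = \DSphere{Q}{\emptyset} = \DSphere{Q}{E} = \DSphere{Q}{X}$, which yields the chain $\Dsqradius{Q}{\emptyset} = \Dsqradius{Q}{E} = \Dsqradius{Q}{X}$. The same sphere, now viewed as $\DSphere{Q}{X}$, has empty back face, so its center lies in $\conv \On{S} = \conv Q$, confirming that $Q$ is a centered Delaunay simplex.

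For the converse, the remark immediately preceding the corollary already notes that if $Q$ is a centered Delaunay simplex, then $S := \DSphere{Q}{X} = \DSphere{Q}{E}$ for every $E \subseteq X$ with $\Front{S} = \On{S} = \Incl{S}$; the sandwich $\Front{S} \subseteq Q \subseteq \Incl{S}$ then collapses to $Q = \Front{S} = \Incl{S}$, so the interval of $Q$ under $\radfun{E}$ is $\{Q\}$ and $Q$ is critical for every $E$. The remaining implication, from equality of the three squared radii to centered Delaunay, falls out of the same KKT argument together with the uniqueness of the smallest sphere satisfying the prescribed include/exclude constraints: if $\Dsqradius{Q}{\emptyset} = \Dsqradius{Q}{X}$, then the \v{C}ech sphere already excludes $X$ and therefore coincides with $\DSphere{Q}{X}$, whose back face must then be empty.

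The main (minor) obstacle is simply to be precise about which parts of \theref{Combinatorial KKT Conditions} depend on $E$; once one observes that only condition (iii) does and that it trivially survives every change of $E$ when $\Back{S} = \emptyset$, the rest of the equivalence is a direct application of the already-established theorems.
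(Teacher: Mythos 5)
Your argument is correct and takes essentially the same route the paper intends: the corollary is stated there as a ``straightforward consequence'' of the \myref{Combinatorial KKT Conditions} together with the remark on centered Delaunay simplices, which is precisely the reduction to $\Front{S}=Q=\On{S}=\Incl{S}$ that you carry out. The only point worth making explicit is that invoking the KKT characterization with $E=X$ also requires the hypothesis that $S$ excludes all of $X$ (not just condition (iii)); this is immediate from your collapse, since $\Incl{S}=\On{S}$ means no point of $X$ lies strictly inside $S$.
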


\paragraph{Wrap complex}
It is now easy to define the Wrap complex using the gradient $V_X$ of
the Delaunay radius function $\radfun{X} \colon \DelTri{X} \to \Rspace$,
which partitions the Delaunay triangulation into intervals
of the form $[\Front{S}, \On{S}]$.
Let $\Digraph$ be the directed graph whose nodes are the intervals
in $V_X$, with an arc from $\mu$ to $\nu$ if there are simplices
$P \in \mu$ and $Q \in \nu$ with $P \subseteq Q$.
It defines a partial order on $V_X$.
The \emph{lower set} of a subset $A \subseteq V_X$,
denoted by $\dsc A$, is the collection of intervals from which $A$
can be reached along directed paths in $\Digraph$.
The lower set of a singular interval is akin to the stable manifold
of a critical point in smooth Morse theory,
except that the lower sets of the critical simplices do not
necessarily form a partition.
Indeed, the lower sets can overlap,
and some of the simplices may not belong to the lower set of any
critical simplex.
The latter can be considered to belong to the lower set of the
`outside', but it will not be necessary to formalize this intuition.
The \emph{Wrap complex} for $r \geq 0$ consists of all simplices
in the lower set of the singular intervals with a Delaunay sphere
of radius at most $r$:
\begin{align}
  \Singular{r}{X}  &=  \{ [Q,Q] \in V_X  \mid  \radfun{X} (Q) \leq r^2 \}, \\
  \Wrap{r}{X}      &=  \bigcup \dsc{\Singular{r}{X}} .
\end{align}
The original definition of the Wrap complex \cite[Section 6]{Edelsbrunner2003Surface}
corresponds to $\Wrap{\infty}{X}$, which we simply denote as $\Wrap{}{X}$.
In the terminology of \cite{Edelsbrunner2003Surface},
a \emph{confident} simplex is the upper bound of a non-singular
Delaunay interval, while all other simplices in the
interval are \emph{equivocal}.
The critical Delaunay simplices are called \emph{centered},
in accordance with \theref{Critical Simplex Corollary}.
Clearly, $\Wrap{r}{X} \subseteq \Wrap{t}{X}$ whenever $r \leq t$.
Moreover, from the construction as a union of lower sets we immediately have $\Wrap{r}{X} \subseteq \Del{r}{X}$
for every $r \in \Rspace$.

\section{Simple-Homotopy Equivalence}
\label{sec5}

In this section, we prove that the various complexes considered
in this paper are simple-homotopy equivalent.
Throughout, we write $Q - x = Q \setminus \{x\}$
and $Q + x = Q \cup \{x\}$,
noting that one of these two simplices is equal to $Q$.
The proof strategy is based on the construction of two discrete gradients.
The first one is defined on the full simplex on $X$
and induces the simplicial collapse $\Cech{r}{X} \searrow \DCech{r}{X}$
by removing all non-Delaunay simplices.
The second discrete gradient is defined on the Delaunay triangulation
$\DelTri{X}$ and induces the collapse $\DCech{r}{X} \searrow \Del{r}{X}$.
While sketched here for the collapse of the \v Cech to the Delaunay complex,
the construction more generally establishes a collapse
of selective Delaunay complexes $\sDel{r}{X}{E}
\searrow \sDel{r}{X}{F}$ for $E \subseteq F \subseteq X$.

The discrete gradients are constructed by assigning to each collapsed simplex
$Q \in \sDel{r}{X}{F} \setminus \sDel{r}{X}{E}$ a point $x \in F \setminus E$ 
that turns the sphere $\DSphere{Q}{E}$
infeasible for the excluded set $F$.
As a consequence, the sphere $\DSphere{Q}{F}$ will either
have a larger radius or not exist at all.
The choice of the vertex~$x$ will be such that both simplices
$Q-x$ and $Q+x$ will be assigned the same vertex~$x$.
The resulting collections of pairs $\{Q-x,Q+x\}$ are then verified
to be the pairs of a discrete gradient that induces the desired collapses.

\paragraph{Auxiliary lemmas}
Before we begin, we present several simple lemmas,
which will be useful in our proofs.
First, we characterize a special class of discrete gradients
that are constructed using a distinguished vertex.
\begin{lemma}[Vertex Gradient Lemma]
  \label{Vertex Gradient Lemma}
  Let $K$ be a simplicial complex, $V$ a discrete vector field on $K$,
  and $x$ a vertex of $K$.
  If every pair in $V$ can be written as $\{Q - x, Q + x\}$
  for some simplex $Q$, then $V$ is a discrete gradient.
\end{lemma}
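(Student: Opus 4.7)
The plan is to invoke Forman's acyclicity criterion: a discrete vector field $V$ on $K$ is the gradient of a discrete Morse function if and only if there are no closed V-paths, equivalently, if and only if the directed graph obtained from $\Hasse{K}$ by reversing the arcs between paired simplices has no directed cycle. Recall that a V-path is an alternating sequence $P_0, R_0, P_1, R_1, \ldots, P_{r-1}, R_{r-1}, P_r$ of simplices in $K$ in which $(P_i, R_i)$ is a pair of $V$ (with $P_i$ a facet of $R_i$) and $P_{i+1}$ is a facet of $R_i$ distinct from $P_i$; it is closed if $r \geq 1$ and $P_r = P_0$.

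The key step is then to trace through what a V-path must look like under the vertex hypothesis. By assumption, every pair in $V$ has the form $\{Q-x, Q+x\}$, so the smaller simplex of any pair does not contain $x$ while the larger one does. Hence $P_0 \not\ni x$ and $R_0 = P_0 + x \ni x$. Any facet of $R_0$ other than $P_0$ is obtained by deleting some vertex $y \in R_0 \setminus \{x\}$, and therefore still contains $x$. In particular, $P_1 \ni x$. For the V-path to extend to length at least two, the simplex $P_1$ would have to appear as the smaller element of a pair $(P_1, R_1) \in V$, but every such smaller element fails to contain $x$. This contradiction shows that no V-path has length $\geq 2$.

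To conclude, observe that a V-path of length exactly one cannot close either: $P_1$ is required to be a facet of $R_0$ different from $P_0$, so $P_1 \neq P_0$. Combined with the previous step, this rules out all closed V-paths, and Forman's criterion delivers that $V$ is a discrete gradient. The argument is essentially mechanical once the acyclicity criterion is in hand; the only delicate point is being careful that the pair-structure $(Q-x, Q+x)$ forces the smaller simplex of every pair to lie in the subcomplex of simplices not containing $x$, which is exactly what obstructs the continuation of a V-path past its first step.
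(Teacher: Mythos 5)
Your proof is correct, but it takes a different route from the paper. You invoke Forman's acyclicity criterion (a discrete vector field is a gradient iff it admits no nontrivial closed $V$-paths) and then observe that the vertex hypothesis kills every $V$-path after one step: the smaller simplex of any pair omits $x$, the larger contains it, every facet of the larger other than the smaller still contains $x$, so no continuation is possible, and a closed path of length one is excluded by the requirement $P_1 \neq P_0$. That argument is sound, but note that it leans on the nontrivial converse direction of Forman's theorem (acyclicity implies the existence of a compatible discrete Morse function); the paper only records the easy direction of that equivalence in Section~\ref{sec2}, so you are importing an external result. The paper instead proves the lemma by a direct, self-contained construction: it exhibits the explicit function $f(Q) = \dim(Q+x) - \tfrac12$ when $\{Q-x, Q+x\} \in V$ and $f(Q) = \dim(Q)$ otherwise, and checks that this is a discrete Morse function with gradient $V$. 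The explicit construction is shorter and stays within the definitions already given; your approach is more conceptual and makes transparent \emph{why} acyclicity holds (paths cannot cross the subcomplex of simplices not containing $x$ more than once), at the cost of citing Forman's characterization.
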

\begin{proof}
  Consider the function $f \colon K \to \Rspace$ defined by 
  taking average dimensions:
  \begin{align}
    f(Q)  &=  \begin{cases}
                \frac{1}{2} \big(\dim(Q - x)+\dim(Q + x)\big)
                     & \text{if } \{Q - x, Q + x\} \in V, \\
                \dim(Q)& \text{otherwise}.
              \end{cases}
  \end{align}
  Clearly $f(Q - x) = f(Q + x)$ whenever $\{Q-x, Q+x\} \in V$.
  Suppose now that $P \subseteq Q$ with $\dime{P} = \dime{Q} - 1$.
  If $P$ is critical or $Q$ is critical, then $f(P) < f(Q)$ is easy to see.
  Assume therefore that $\{P-x, P+x\}$ and $\{Q-x, Q+x\}$ are different
  pairs in $V$.
  Then $f(P) < f(Q)$ unless $P = P-x$ and $Q = Q+x$, but the latter case
  would imply $Q = P+x$, which contradicts the disjointness of the pairs.
\end{proof}

We will also make use of the following lemma, which allows us to
extend a discrete gradient on a subcomplex by a discrete gradient
on its complement.
A proof can be found in \cite[Lemma~4.3]{Jonsson2008Simplicial}.
\begin{lemma}[Gradient Composition Lemma]
  \label{Gradient Composition Lemma}
  Let $K \subseteq L$ be simplicial complexes
  with discrete gradients $V$ of $K$ and $W$ of $L$.
  If every pair in $W$ is disjoint from $K$, then
  the pairs in $V \cup W$ define a discrete gradient on $L$.
\end{lemma}

The following lemma will be useful to obtain a common refinement
of two generalized gradients by taking the sum of the two corresponding
generalized Morse functions.
We omit the proof, which is straightforward.
\begin{lemma}[Sum Refinement Lemma]
  \label{Sum Refinement Lemma}
  Let $f \colon K \to \Rspace$ and $g \colon L \to \Rspace$
  be generalized discrete Morse functions with gradients $V$ and $W$.
  Then $f+g \colon K \cap L \to \Rspace$ is a generalized Morse function
  with gradient
  $\{I \cap J \mid I \in V, J \in W, I \cap J \neq \emptyset \}$.
\end{lemma}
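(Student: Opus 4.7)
The plan is to verify two things: first, that the claimed collection $W = \{I \cap J \mid I \in V_f,\, J \in V_g,\, I \cap J \neq \emptyset\}$ is indeed a partition of $K \cap L$ into intervals in the face relation; second, that $f+g$ is monotone along faces, with equality exactly on pieces of $W$. Since $K$ and $L$ are simplicial complexes, so is $K \cap L$, so it makes sense to talk about intervals in its face relation.

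For the first point, I would first observe that the intersection of two face-intervals $[P_1,R_1]$ and $[P_2,R_2]$ is again an interval whenever it is non-empty: a simplex $Q$ lies in both iff $P_1 \cup P_2 \subseteq Q \subseteq R_1 \cap R_2$, so the intersection is $[P_1 \cup P_2,\, R_1 \cap R_2]$ (and when this is non-empty, $P_1 \cup P_2$ is a face of some simplex in the complex, hence itself a simplex). Since $V_f$ partitions $K$ and $V_g$ partitions $L$, every simplex $Q \in K \cap L$ belongs to exactly one $I \in V_f$ and to exactly one $J \in V_g$, hence to exactly one $I \cap J \in W$. Therefore $W$ is a partition of $K \cap L$ into intervals.

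For the second point, let $P \subseteq Q$ with both in $K \cap L$. Because $f$ is a generalized discrete Morse function on $K$, we have $f(P) \leq f(Q)$ with equality iff $P,Q$ lie in a common interval of $V_f$; the analogous statement holds for $g$. Adding the inequalities gives $(f+g)(P) \leq (f+g)(Q)$, with equality iff both individual inequalities are tight, iff $P$ and $Q$ share an interval $I \in V_f$ and share an interval $J \in V_g$, iff they lie together in the interval $I \cap J \in W$. This verifies the defining condition of a generalized discrete Morse function with gradient $W$.

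There is no real obstacle; the only point requiring a moment of thought is that the intersection of two face-intervals is again a face-interval (not merely a convex subposet of some other shape), and this follows immediately from the fact that the face relation is just inclusion, so meets and joins are given by union and intersection of simplices. Everything else is a direct unpacking of the definitions.
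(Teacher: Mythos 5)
Your proof is correct, and since the paper explicitly omits the proof as ``straightforward,'' your argument is exactly the intended one: intersections of face-intervals are again face-intervals (namely $[P_1 \cup P_2, R_1 \cap R_2]$), these intersections partition $K \cap L$, and adding the two inequalities shows $f+g$ is monotone with equality precisely on the pieces $I \cap J$. Nothing is missing.
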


In order to analyze the discrete gradients of radius functions, 
we note that for $Q, E \subseteq X$ and $S = \DSphere{Q}{E}$,
removing a point $x \in \Incl{S}$ from $Q$
affects the smallest sphere only if $x \in \Front S$.
Likewise, removing a point $y \in \Excl S$ from $E$
affects the smallest sphere only if $y \in \Back S$.
\begin{lemma}[Same Sphere Lemma]
  \label{Same Sphere Lemma}
  Let $Q$ be a simplex in $\sDelTri{X}{E}$ and $S = \DSphere{Q}{E}$ the
  smallest sphere that includes $Q$ and excludes $E$.
  Then
  \begin{enumerate}[(i)]
    \item $S = \DSphere{Q - x}{E} = \DSphere{Q + x}{E}$ iff
      $x \in \Incl{S} \setminus \Front S$,
    \item $S = \DSphere{Q}{E - y} = \DSphere{Q}{E + y}$ iff
      $y \in \Excl S \setminus \Back S$.
  \end{enumerate}
\end{lemma}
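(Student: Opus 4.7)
The plan is to derive both parts directly from \theref{Combinatorial KKT Conditions}, since that theorem gives a clean characterization of the smallest enclosing/excluding sphere purely in terms of $\Front{S}$, $\Back{S}$, and $\On{S}$. The first observation is that in each of the four assertions the candidate sphere is already $S$ itself, so the burden is only to verify that $S$ remains \emph{the smallest} sphere including the new vertex set and excluding the new exclusion set. Because $S$ itself is a valid sphere for each of the modified problems, existence is automatic and each smallest sphere is well-defined.

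For part (i), fix $x \in \Incl{S} \setminus \Front{S}$. To show $S = \DSphere{Q-x}{E}$, I would check the three conditions of \theref{Combinatorial KKT Conditions} with $Q$ replaced by $Q-x$: condition (i) (that $S$ is the smallest circumsphere of $\On{S}$) is inherited unchanged; condition (iii) ($\Back{S}\subseteq E$) is unchanged; and the crucial condition (ii) now requires $\Front{S}\subseteq Q-x$, which holds precisely because $x\notin\Front{S}$ while $\Front{S}\subseteq Q$ by the original assumption. The identity $S=\DSphere{Q+x}{E}$ is even easier, since passing from $Q$ to $Q+x$ only enlarges the set on the left of (ii), keeping $\Front{S}\subseteq Q\subseteq Q+x$, while $x\in\Incl{S}$ guarantees that $S$ still includes $Q+x$.

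Part (ii) is the mirror argument. Fix $y\in\Excl{S}\setminus\Back{S}$. For $S=\DSphere{Q}{E-y}$ I would again verify (i)--(iii): now condition (iii) requires $\Back{S}\subseteq E-y$, which holds since $y\notin\Back{S}$; conditions (i) and (ii) are unchanged. For $S=\DSphere{Q}{E+y}$, the inclusion $\Back{S}\subseteq E\subseteq E+y$ is trivial, and $y\in\Excl{S}$ ensures $S$ still excludes $E+y$.

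There is no real obstacle here: the only content is recognizing that the KKT witnesses of optimality for $(Q,E)$ transfer unchanged to each of the four perturbed problems, precisely because the perturbation either adds a point that is already in $\On{S}$ in the appropriate direction or removes a point that was not on the active (front/back) boundary. The general position assumption is not even needed beyond what is already baked into \theref{Combinatorial KKT Conditions}.
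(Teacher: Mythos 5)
Your proof is correct and follows essentially the same route as the paper: both arguments reduce to \theref{Combinatorial KKT Conditions}, observing that the optimality certificate for $(Q,E)$ remains valid for the perturbed pairs because $\Front{S}\subseteq Q\mp x\subseteq\Incl{S}$ (resp.\ $\Back{S}\subseteq E\mp y\subseteq\Excl{S}$). The paper merely phrases part (i) more compactly by saying $Q-x$ and $Q+x$ lie in the same interval $[\Front{S},\Incl{S}]$ and declares part (ii) analogous, whereas you verify the three KKT conditions explicitly in all four cases.
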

\begin{proof}
  We show (i), omitting the proof of (ii), which is analogous.
  By \theref{Combinatorial KKT Conditions}, we have 
  $\Front{S} \subseteq Q \subseteq \Incl{S}$.
  Now $x \in \Incl{S} \setminus \Front S$ is equivalent to 
  $\Front{S} \subseteq Q-x$ and $Q+x \subseteq \Incl{S}$,
  which implies that $Q-x$ and $Q+x$ belong to the same interval
  $[\Front{S}, \Incl{S}]$.
  The claim follows.
\end{proof}

\paragraph{Pairing lemmas}
Next, we prove two key technical lemmas that will facilitate
the construction of discrete gradients proving our collapsibility results.
Let $E \subseteq F \subseteq X$ and consider
a simplex $Q$ whose spheres $\DSphere{Q}{E}$ and $\DSphere{Q}{F}$
both exists but are different.
We show that there is a point in $F \setminus E$ such that adding
the point to $Q$ or removing it from $Q$ affects neither of the spheres.
\begin{lemma}[First Simplex Pairing Lemma]
  \label{First Simplex Pairing Lemma}
  Let $E \subseteq F \subseteq X$ and $Q \in \sDelTri{X}{F}$ with
  $\DSphere{Q}{E} \neq \DSphere{Q}{F}$.
  Then there exists a point $x \in F \setminus E$ such that
  \begin{enumerate}[(i)]
    \item $\DSphere{Q-x}{E} = \DSphere{Q+x}{E}$,
    \item $\DSphere{Q-x}{F} = \DSphere{Q+x}{F}$.
  \end{enumerate}
\end{lemma}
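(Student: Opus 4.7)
The plan is to reduce the two equalities of the Pairing Lemma, via the Same Sphere Lemma, to membership conditions in the faces of $S_E = \DSphere{Q}{E}$ and $S_F = \DSphere{Q}{F}$, and then to exhibit a point in $\Back{S_F} \setminus E$ satisfying both, via the Combinatorial KKT Conditions and a minimality argument. Denote the squared radii by $s_E \leq s_F$. Since the reformulated convex program is strictly convex in $z$, the optimal sphere is unique, so $S_E \neq S_F$ upgrades to the strict inequality $s_E < s_F$. Applying the Same Sphere Lemma (i) separately with $E$ and with $F$ shows that it suffices to exhibit a single $x \in F \setminus E$ with $x \in \Incl{S_E} \setminus \Front{S_E}$ and $x \in \Incl{S_F} \setminus \Front{S_F}$. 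Since $x \in F \subseteq \Excl{S_F}$ and general position yields $\On{S_F} = \Front{S_F} \sqcup \Back{S_F}$, the second condition simplifies to $x \in \Back{S_F}$, while $x$ lying strictly inside $S_E$ implies the first. So the task is to find $x \in \Back{S_F} \setminus E$ strictly inside $S_E$.

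For existence, I would first observe the self-dual identity $S_F = \DSphere{Q}{\Back{S_F}}$, which follows immediately from the Combinatorial KKT Conditions applied to the pair $(Q, \Back{S_F})$. Suppose for contradiction that every $x \in \Back{S_F} \setminus E$ lies in $\Excl{S_E}$. Combined with $\Back{S_F} \cap E \subseteq E \subseteq \Excl{S_E}$, this gives $\Back{S_F} \subseteq \Excl{S_E}$, so $S_E$ is feasible for the program defining $\DSphere{Q}{\Back{S_F}} = S_F$. Hence $s_F \leq s_E$, contradicting $s_E < s_F$. Therefore some $x \in \Back{S_F} \setminus E$ is strictly inside $S_E$, and it satisfies all of the reductions above.

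The main obstacle, I expect, is identifying the correct candidate set. The naive attempt --- take any $x \in F \setminus E$ strictly inside $S_E$, which exists because $S_E$ violates some exclusion constraint of the $F$-program --- resolves (i) but can fail (ii); one can construct examples where such an $x$ is not in $\Back{S_F}$. The right insight is that $\Back{S_F}$ is both the active exclusion set for $S_F$ and, via the self-dual identity, a minimal exclusion set producing $S_F$; this minimality is exactly what converts $s_E < s_F$ into the required ``strictly inside $S_E$'' witness.
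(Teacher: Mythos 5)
Your proposal is correct and follows essentially the same route as the paper's proof: choose $x \in \Back{\DSphere{Q}{F}}$ not excluded by $\DSphere{Q}{E}$, whose existence follows from the identity $\DSphere{Q}{F} = \DSphere{Q}{\Back{\DSphere{Q}{F}}}$ (Combinatorial KKT) together with the strict radius inequality, and then apply the Same Sphere Lemma once with $E$ and once with $F$. Your write-up merely spells out the strictness $s_E < s_F$ and the feasibility contradiction, which the paper leaves terse.
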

\begin{proof}
  To construct the point in question, we write $S = \DSphere{Q}{E}$
  and $T = \DSphere{Q}{F}$.
  By \theref{Combinatorial KKT Conditions},
  we have $T = \DSphere{Q}{\Back T}$.
  By assumption we have $S \neq T$, and because $E \subseteq F$, the sphere $S$ is smaller than~$T$.
  It can therefore not exclude all points of $\Back{T}$,
  and we let $x$ be any point in $\Back{T} \setminus \Excl{S}$.
  Clearly $x \in F \setminus E$.
  Finally, we apply the first claim in \theref{Same Sphere Lemma} twice
  to get the claimed relations.
  First we use $x \not\in \Excl{S}$ to get $x \in \Incl{S} \setminus \Front{S}$
  so applying the lemma gives $\DSphere{Q-x}{E} = \DSphere{Q+x}{E}$ as claimed in (i).
  Second we use $x \not\in \Front{T}$ to get $x \in \Incl{T} \setminus \Front{T}$
  so applying the lemma gives $\DSphere{Q-x}{F} = \DSphere{Q+x}{F}$ as claimed in (ii).
\end{proof}

We note that the set $\Back{T} \setminus \Excl{S}$
and the point $x$ in this set selected for $Q$ in the above proof
work for both $Q-x$ and for $Q+x$.
In other words, substituting $Q-x$ or $Q+x$ for $Q$ in
\theref{First Simplex Pairing Lemma} does not affect
the claimed relations, and we can consistently select
the same point $x$ for both $Q-x$ and $Q+x$.
The following lemma makes this observation precise.
\begin{lemma}[First Consistent Pairing Lemma]
  \label{First Consistent Pairing Lemma}
  Assuming $E$ is a proper subset of~$F$, there is a map
  $$
    \varphi \colon \{Q \in \sDelTri{X}{F} \mid 
  \DSphere{Q}{E} \neq \DSphere{Q}{F} \} \to F \setminus E
  $$
  such that $x = \varphi (Q)$ satisfies the properties of
  \theref{First Simplex Pairing Lemma},
  and $x = \varphi (Q-x) = \varphi (Q+x)$.
\end{lemma}
\begin{proof}
  To define the map $\varphi$, we let $x_1, x_2, \ldots, x_m$
  be an arbitrary but fixed ordering of the vertices in $X$.
  Let $Q \in \sDel{}{X}{F}$ be a simplex, and let
  $S = \DSphere{Q}{E}$ and $T = \DSphere{Q}{F}$.
  Now consider the vertex
  $x=x_i \in \Back{T} \setminus \Excl{S}$ 
  with the smallest index in the chosen ordering.
  This vertex $x$ satisfies the properties of
  \theref{First Simplex Pairing Lemma}, as shown in its proof.
  The choice of this vertex depends only on the two spheres 
  $S = \DSphere{Q}{E}$ and $T = \DSphere{Q}{F}$, 
  and by \theref{First Simplex Pairing Lemma},
  using either $Q-x$ or $Q+x$ in place of $Q$ yield this same pair of spheres.
  Defining $\varphi (Q) = x$,
  we conclude $x = \varphi (Q-x) = \varphi (Q+x)$.
\end{proof}

Consider next a simplex that belongs to the selective Delaunay complex
for $E$ but not for $F$.
We show that there exists a point $x$ that has properties similar
to those established in \theref{First Simplex Pairing Lemma}.
\begin{lemma}[Second Simplex Pairing Lemma]
  \label{Second Simplex Pairing Lemma}
  Let $E \subseteq F \subseteq X$ and let $Q$ be a simplex in $\sDelTri{X}{E}$
  but not in $\sDelTri{X}{F}$.
  Then there exists a point $x \in F \setminus E$ such that
  \begin{enumerate}[(i)]
    \item $\DSphere{Q-x}{E} = \DSphere{Q+x}{E}$,
    \item both $Q-x$ and $Q+x$ are not in $\sDelTri{X}{F}$.
  \end{enumerate}
\end{lemma}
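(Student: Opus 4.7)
The strategy is to identify candidates for $x$ among the points of $F \setminus E$ that lie strictly inside the smallest sphere $S = \DSphere{Q}{E}$, and then certify (ii) in the hard subcase via LP duality. Since $Q \in \sDelTri{X}{E}$, the sphere $S$ exists and excludes $E$; since $Q \notin \sDelTri{X}{F}$, $S$ fails to exclude $F$, so some $y \in F \setminus E$ lies strictly inside $S$. Let $\mathcal{X}$ denote the (nonempty) set of such points. Every $x \in \mathcal{X}$ satisfies $x \in \Incl{S} \setminus \On{S} \subseteq \Incl{S} \setminus \Front{S}$, so the Same Sphere Lemma~(i) already yields $\DSphere{Q-x}{E} = \DSphere{Q+x}{E} = S$, giving condition (i) for every choice of $x \in \mathcal{X}$.

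Half of (ii) is automatic: $Q + x \notin \sDelTri{X}{F}$ follows from $Q \subseteq Q + x$ (a sphere for $Q+x$ would also serve $Q$), and $Q - x = Q \notin \sDelTri{X}{F}$ trivially when $x \notin Q$. The genuine content is preventing $Q - x \in \sDelTri{X}{F}$ in the subcase $x \in Q$. Because $T = \DSphere{Q}{F}$ does not exist, I cannot copy the First Simplex Pairing Lemma's construction $x \in \Back{T} \setminus \Excl{S}$; instead I invoke LP duality.

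Applied to the infeasible convex program for $\DSphere{Q}{F}$, Farkas's lemma (equivalently, KKT applied to the penalty relaxation $\min \epsilon$ subject to $g_q(y) \leq \epsilon$ and $-g_e(y) \leq \epsilon$) produces nonnegative multipliers $\alpha_q$ for $q \in Q$ and $\beta_e$ for $e \in F$, not all zero, with $\sum \alpha_q q = \sum \beta_e e$, $\sum \alpha_q = \sum \beta_e$, and the strict inequality $\sum \alpha_q(\|q\|^2 - w_q) > \sum \beta_e(\|e\|^2 - w_e)$. Subtracting $\min(\alpha_e,\beta_e)$ from both multipliers for each $e \in Q \cap F$ preserves all three conditions and yields a certificate with $\alpha_e \beta_e = 0$ on $Q \cap F$. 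Evaluating the Farkas identity at the parameter $y_S$ of $S$: every $\alpha_q g_q(y_S) \leq 0$ and every $-\beta_e g_e(y_S) \leq 0$ except possibly for $e \in \mathcal{X}$ (where $g_e(y_S) < 0$), so strict positivity of the identity forces $\beta_x > 0$ for some $x \in \mathcal{X}$. For this $x$, the reduction makes $\alpha_x = 0$ when $x \in Q$ (and $\alpha_x$ is vacuous when $x \notin Q$); restricting the certificate to $Q - x$ by dropping this zero $\alpha_x$ yields a valid Farkas certificate for $Q - x \notin \sDelTri{X}{F}$. The principal technical obstacle is exactly this dual step---producing an $x \in \mathcal{X}$ whose $\alpha$-coefficient vanishes---which plays the role of the $\Back{T}$ selection from the First Simplex Pairing Lemma.
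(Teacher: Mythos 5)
Your proof is correct, but it takes a genuinely different route from the paper for the crucial part (ii). The paper works primally: it sets $F_Q = F \cap \Excl{\DSphere{Q}{E}}$, notes $\DSphere{Q}{E} = \DSphere{Q}{F_Q}$, and interpolates between $F_Q$ and $F$ one vertex at a time, choosing $x$ as the vertex whose addition to an intermediate set $A$ makes $Q$ drop out of $\sDel{}{X}{A+x}$; both parts of \theref{Same Sphere Lemma} plus a short contradiction argument then show that $Q-x$ and $Q+x$ already fail to lie in $\sDel{}{X}{A+x}$, hence in $\sDelTri{X}{F}$. You instead argue dually: after the paper's lifting $a = \norm{z}^2 - s$ the non-membership $Q \notin \sDelTri{X}{F}$ is infeasibility of a linear system, a Farkas certificate can be normalized to be complementary on $Q \cap F$, and evaluating it at the parameters of $S = \DSphere{Q}{E}$ forces a point $x \in F\setminus E$ strictly inside $S$ with $\beta_x > 0$ and (if $x \in Q$) $\alpha_x = 0$, so that dropping $\alpha_x$ certifies $Q - x \notin \sDelTri{X}{F}$; part (i) and the easy half of (ii) are handled exactly as in the paper via \theref{Same Sphere Lemma}. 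Your argument is more self-contained for the existence statement and fits naturally into the paper's convex-optimization framework (you should just make explicit the routine equivalence between feasibility of the lifted system and membership in $\sDelTri{X}{F}$, obtained by taking $s$ to be the largest power distance over the included vertices). What the paper's seemingly roundabout construction buys is canonicity: its choice of $x$ depends only on $\DSphere{Q}{E}$, $F$, and a fixed vertex order, which is exactly what the subsequent \myref{Consistent Pairing Lemma} needs to ensure $g(Q-x)=g(Q+x)=x$ and hence an honest discrete vector field; your $x$ depends on a non-canonical Farkas certificate, so while it proves the lemma as stated, it would need an additional canonicalization step before it could replace the paper's construction downstream.
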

\begin{proof}
  To construct the point in question, we write
  $F_Q = F \cap \Excl{\DSphere{Q}{E}}$ and note that
  $S = \DSphere{Q}{E} = \DSphere{Q}{F_Q}$
  by \theref{Combinatorial KKT Conditions}.
  In particular, $Q \in \sDel{}{X}{F_Q}$.
  Let $A \subseteq F$ be a subset of the points $F$ and $x \not\in A$
  satisfying $F_Q \subseteq A \subseteq A+x \subseteq F$
  such that $Q$ belongs to $\sDel{}{X}{A}$ but not to $\sDel{}{X}{A+x}$.
  It is clear that such $A$ and $x$ exist.
  Since $x \not\in \Excl{S}$,
  we have $x \in \Incl{S} \setminus \On{S} \subseteq \Incl{S} \setminus \Front{S}$.
  Applying \theref{Same Sphere Lemma}, we get
  $S = \DSphere{Q-x}{E} = \DSphere{Q+x}{E}$, as claimed in (i).
  Similarly, 
  $Q \in \sDel{}{X}{A} \setminus \sDel{}{X}{A+x}$ implies that 
  $x \not\in \Excl{\DSphere{Q}{A}}$, and we also get $\DSphere{Q-x}{A} = \DSphere{Q+x}{A}$,
  which will be useful shortly.
  In particular, note that $Q-x, Q+x$ both belong to $\sDel{}{X}{A}$.

  To see (ii), we note that $Q \not\in \sDel{}{X}{A+x}$ by assumption
  and therefore also $Q+x \not\in \sDel{}{X}{A+x}$.
  Since $A+x \subseteq F$, this implies the second relation in (ii).
  To derive a contradiction, we assume $Q-x \in \sDel{}{X}{F}$ or,
  by implication, $Q-x \in \sDel{}{X}{A+x}$.
  Using $\DSphere{Q-x}{A} = \DSphere{Q+x}{A}$, we get $x \not\in \Excl{\DSphere{Q-x}{A}}$.
  Hence, $x$ lies inside the sphere $\DSphere{Q-x}{A}$
  and we have $\DSphere{Q-x}{A+x} \neq \DSphere{Q-x}{A}$.
  On the other hand, since $Q+x \not\in \sDel{}{X}{A+x}$,
  we know that $x \not\in \Incl{\DSphere{Q-x}{A+x}}$.
  Applying the second claim in \theref{Same Sphere Lemma},
  we get $\DSphere{Q-x}{A+x}=\DSphere{Q-x}{A}$,
  a contradiction to the above.
  We thus conclude that $Q-x, Q+x$ both do not belong to
  $\sDel{}{X}{A+x}$ and therefore also not to $\sDel{}{X}{F}$.
\end{proof}

Similar to above, the set $A$ and the point $x \not\in A$ selected for $Q$
in the above proof work for both $Q-x$ and for $Q+x$, and we can consistently 
select the same point $x$ for both $Q-x$ and $Q+x$.
\begin{lemma}[Second Consistent Pairing Lemma]
  \label{Consistent Pairing Lemma}
  Assuming $E$ is a proper subset of~$F$, there is a map
  $$
    \psi \colon \sDelTri{X}{E} \setminus \sDelTri{X}{F} \to F \setminus E
  $$
  such that $x = \psi (Q)$ satisfies the properties of
  \theref{Second Simplex Pairing Lemma}, and $x = \psi (Q-x) = \psi (Q+x)$.
\end{lemma}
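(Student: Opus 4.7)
The plan is to construct $g$ by canonicalizing the two free choices in the proof of \theref{Second Simplex Pairing Lemma} -- the set $A$ and the point $x$ -- so that the resulting $g(Q)$ depends on $Q$ only through data that is invariant under the substitutions $Q \mapsto Q - x$ and $Q \mapsto Q + x$.

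I would first fix an arbitrary total order on $X$. For a simplex $Q \in \sDelTri{X}{E} \setminus \sDelTri{X}{F}$, set $F_Q = F \cap \Excl{\DSphere{Q}{E}}$, enumerate the points of $F \setminus F_Q$ in the fixed order as $y_1, y_2, \ldots$, and write $A_j = F_Q \cup \{y_1, \ldots, y_j\}$. Since $A_j$ eventually exhausts $F$ and $Q \not\in \sDelTri{X}{F}$, there is a smallest index $k(Q)$ with $Q \not\in \sDel{}{X}{A_{k(Q)}}$; define $g(Q) = y_{k(Q)}$. With $A = A_{k(Q) - 1}$ and $x = y_{k(Q)}$, the hypotheses of the proof of \theref{Second Simplex Pairing Lemma} are met, so conclusions (i) and (ii) of that lemma hold for $g(Q)$.

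The crux is then to verify $g(Q - x) = g(Q + x) = x$ for $x = g(Q)$, which I would establish in two steps. \emph{Same input data:} The proof of \theref{Second Simplex Pairing Lemma} already shows $x \not\in \Excl{\DSphere{Q}{E}}$, hence $x \in \Incl{\DSphere{Q}{E}} \setminus \Front{\DSphere{Q}{E}}$; \theref{Same Sphere Lemma}(i) then yields $\DSphere{Q - x}{E} = \DSphere{Q + x}{E} = \DSphere{Q}{E}$, so $F_{Q - x} = F_Q = F_{Q + x}$ and the sequences $(y_j)$ and $(A_j)$ coincide for $Q$, $Q - x$, and $Q + x$. \emph{Same transition index:} The existing proof provides $Q - x, Q + x \in \sDel{}{X}{A_{k(Q) - 1}}$ together with $Q - x, Q + x \not\in \sDel{}{X}{A_{k(Q)}}$; monotonicity of selective Delaunay complexes in the constraint set propagates the first membership to all $A_j$ with $j < k(Q)$ and the second to all $A_j$ with $j \geq k(Q)$, so $k(Q - x) = k(Q) = k(Q + x)$. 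The canonical rule therefore returns $x$ in all three cases.

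The main obstacle is purely bookkeeping: pinpointing which ingredients of the construction are $Q$-dependent. Once this dependence is isolated to $F_Q$ and its invariance under $Q \mapsto Q \pm x$ is secured via \theref{Same Sphere Lemma}, the remaining verification is a transparent consequence of the existing proof of \theref{Second Simplex Pairing Lemma} together with the monotonicity of selective Delaunay complexes in the constraint set.
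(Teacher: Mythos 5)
Your proposal is correct and follows essentially the same route as the paper's proof: fix a total order, build $A$ as a prefix-extension of $F_Q = F \cap \Excl{\DSphere{Q}{E}}$, define $g(Q)$ as the transition point, and use $\DSphere{Q-x}{E} = \DSphere{Q+x}{E}$ (so $F_{Q\pm x} = F_Q$) together with $Q-x, Q+x \in \sDel{}{X}{A} \setminus \sDel{}{X}{A+x}$ from the Second Simplex Pairing Lemma's proof to conclude $g(Q-x) = g(Q+x) = x$. Your explicit monotonicity argument for the transition index just spells out what the paper leaves implicit in its uniqueness claim for the index $j$.
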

\begin{proof}
  To define $\psi$, we let $x_1, x_2, \ldots, x_m$
  be an arbitrary but fixed ordering of the vertices in $X$.
  Let $Q \in \sDel{}{X}{E}$ be a simplex considered in
  \theref{Second Simplex Pairing Lemma} and recall that
  $F_Q = F \cap \Excl{\DSphere{Q}{E}}$.
  There is a unique index~$j$ such that
  $A = F_Q \cup (F \cap \{x_1, x_2, \ldots, x_{j-1}\})$ and $x = x_j$
  satisfy the criteria spelled out in the proof of
  \theref{Second Simplex Pairing Lemma};
  concretely, we have
  $Q \in \sDel{}{X}{A} \setminus \sDel{}{X}{A+x}$ and $x \in F \setminus E$.
  We use this choice of vertex to define $\psi (Q) = x_j$.

  Note that the choice of $F_Q$ depends only on $\DSphere{Q}{E}$,
  and since $\DSphere{Q-x}{E} = \DSphere{Q+x}{E}$,
  we get the same $A$ and $x$ for $Q-x$ as well as for $Q+x$.
  We also have $Q-x, Q+x \in \sDel{}{X}{A} \setminus \sDel{}{X}{A+x}$,
  as pointed out in the proof of \theref{Second Simplex Pairing Lemma},
  so we conclude that $x = \psi (Q-x) = \psi (Q+x)$, as claimed.
\end{proof}

\paragraph{Collapsing}
We are now ready to prove two collapsibility results for selective
Delaunay complexes.
They will imply the main results of this paper.
\begin{theorem}[Selective Delaunay Collapsing Theorem]
  \label{Selective Delaunay Collapsing Theorem}
  Let $X$ be a finite set of possibly weighted points
  in general position in $\Rspace^n$, and let $E \subseteq F \subseteq X$.
  Then
  \begin{align}
    \sDel{r}{X}{E} \searrow \sDel{r}{X}{E} \cap \sDelTri{X}{F}
                   \searrow \sDel{r}{X}{F}
  \end{align}
  for every $r \in \Rspace$.
\end{theorem}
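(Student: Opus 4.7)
The plan is to prove each of the two collapses by exhibiting a discrete gradient on the larger complex whose pairs cover the simplices to be removed, then invoking \theref{Collapsing Theorem}. The gradients come from the pair maps produced by \theref{Consistent Pairing Lemma} for the first collapse and an analogous consistent version of \theref{First Simplex Pairing Lemma} for the second.

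For the first collapse $\sDel{r}{X}{E} \searrow \sDel{r}{X}{E} \cap \sDelTri{X}{F}$, I would define a discrete vector field $V_1$ on $\sDel{r}{X}{E}$ whose pairs are $\{Q - g(Q), Q + g(Q)\}$ for every $Q \in \sDel{r}{X}{E} \setminus \sDelTri{X}{F}$, with the remaining simplices in $\sDel{r}{X}{E} \cap \sDelTri{X}{F}$ serving as singletons. By \theref{Second Simplex Pairing Lemma} together with \theref{Same Sphere Lemma}, both simplices in such a pair share $\DSphere{\cdot}{E}$ with $Q$, so they lie in $\sDel{r}{X}{E}$, and both lie outside $\sDelTri{X}{F}$. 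Consistency of $g$ turns this assignment into a partition of the difference.

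The hard part will be showing $V_1$ is acyclic. Along any $V_1$-path, $\radfun{E}$ is non-decreasing across face arcs and constant across paired arcs, so a cycle would have constant $\radfun{E}$-value and, by \theref{Selective Delaunay Morse Function Theorem}, must lie in a single interval $I_S = [\Front S, \Incl{S}]$. Since $\sDelTri{X}{F}$ is closed under faces and contributes only singletons, no cycle visits it. Within $I_S \setminus \sDelTri{X}{F}$, I would stratify by $j(Q)$, the index of $g(Q)$ in the vertex ordering fixed in \theref{Consistent Pairing Lemma}. For a step $\alpha_i \subset \beta_i = \alpha_i \cup \{x_i\} \supset \alpha_{i+1}$ with $x_i = g(\alpha_i)$, the face-closure of each $\sDel{}{X}{A}$ in its include argument gives $j(\alpha_{i+1}) \geq j(\alpha_i)$; on the other hand $\alpha_{i+1}$, being $\beta_i$ minus a vertex of $\alpha_i$, contains $x_i$, while as the lower element of its own pair $\alpha_{i+1}$ cannot contain $g(\alpha_{i+1})$, forcing $g(\alpha_{i+1}) \neq x_i$ and hence $j(\alpha_{i+1}) > j(\alpha_i)$ strictly. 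A closed cycle would require a strictly increasing sequence of $j$-values to return to its start, a contradiction, so $V_1$ is acyclic and \theref{Collapsing Theorem} delivers the first collapse.

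For the second collapse $\sDel{r}{X}{E} \cap \sDelTri{X}{F} \searrow \sDel{r}{X}{F}$, every simplex $Q$ in the difference satisfies $\radfun{E}(Q) \leq r^2 < \radfun{F}(Q)$, whence $\DSphere{Q}{E} \neq \DSphere{Q}{F}$ and $Q$ falls within the scope of \theref{First Simplex Pairing Lemma}. I would construct a consistent map $g'$ by the same ordering trick, picking $g'(Q)$ to be the first vertex $x_j \in \Back{\DSphere{Q}{F}} \setminus \Excl{\DSphere{Q}{E}}$. Since both Delaunay spheres are preserved under the toggling $Q \mapsto Q \pm g'(Q)$, the map $g'$ is consistent and the pairs $\{Q - g'(Q), Q + g'(Q)\}$ partition $(\sDel{r}{X}{E} \cap \sDelTri{X}{F}) \setminus \sDel{r}{X}{F}$ while remaining inside it. The acyclicity argument repeats with intervals of $\radfun{F}$ in place of $\radfun{E}$, and a final application of \theref{Collapsing Theorem} completes the proof.
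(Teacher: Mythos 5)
Your proposal is correct in substance, but it takes a genuinely different route from the paper, and the difference is mainly in how acyclicity is obtained. For the second collapse, the paper never builds an explicit pairing: it forms the common refinement of the two gradients $V_E$ and $V_F$ via \theref{Sum Refinement Lemma}, so the resulting generalized vector field $W$ is automatically acyclic (it is the gradient of $\radfun{E}+\radfun{F}$), and \theref{First Simplex Pairing Lemma} is used only to show that the intervals of $W$ covering the difference are non-singular; \theref{Generalized Collapsing Theorem} then finishes. For the first collapse, the paper slices the removal into per-vertex stages $K_{i}$, where each stage is acyclic for free by \theref{Vertex Gradient Lemma}, and glues the stages with \theref{Gradient Composition Lemma}. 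You instead build one explicit discrete vector field per collapse (via \theref{Consistent Pairing Lemma}, and an analogous consistent choice for \theref{First Simplex Pairing Lemma}) and prove acyclicity by hand: a closed $V$-path has constant radius value, hence lies in a single gradient interval, where either the index of $g(Q)$ strictly increases along the path or the toggled vertex is forced to lie in and not lie in the lower simplex -- a contradiction. Your monotonicity argument for $j$ is sound: within one interval all simplices share $\DSphere{\cdot}{E}$, hence the same $F_Q$ and the same nested sequence of sets $A_j$, face-closure of $\sDel{}{X}{A_j}$ gives $j(\alpha_{i+1})\geq j(\alpha_i)$, and the lower-element condition gives strictness. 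What each approach buys: the paper's route avoids all $V$-path bookkeeping and exhibits the collapse through canonical generalized gradients, while yours is more self-contained, producing a single concrete pairing per collapse without invoking the composition or sum-refinement lemmas, at the cost of redoing the acyclicity analysis explicitly.

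One imprecision to fix in your last step: for the second collapse, saying the acyclicity argument ``repeats with intervals of $\radfun{F}$ in place of $\radfun{E}$'' is not quite enough, because $g'(Q)$ depends on both $\DSphere{Q}{F}$ and $\DSphere{Q}{E}$, and constancy of the $F$-sphere alone along a hypothetical cycle does not pin down $g'$. Since your pairs preserve both spheres, both $\radfun{E}$ and $\radfun{F}$ are constant on a cycle, so the cycle lies in a single interval of the common refinement; there $g'$ is constant, and the fixed toggling vertex yields the contradiction exactly as in your first argument. This is a one-line repair, fully consistent with what you already established.
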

\begin{proof}
  We show that both collapses are induced by discrete gradients
  constructed with the help of the two Simplex Pairing Lemmas.
  We first prove the second collapse,
  $\sDel{r}{X}{E} \cap \sDelTri{X}{F} \searrow \sDel{r}{X}{F}$.
  Let $V_E$ and $V_F$ be the generalized discrete gradients
  of the radius functions $\radfun{E} \colon \sDel{}{X}{E} \to \Rspace$
  and $\radfun{F} \colon \sDel{}{X}{F}$.
  By \theref{Sum Refinement Lemma}, the function
  $\radfun{E} + \radfun{F} \colon \Del{}{X,F} \to \Rspace$
  is a generalized discrete Morse function,
  and its generalized discrete gradient is
  \begin{align}
    W  &=  \{I \cap J \mid I \in V_E, J \in V_F, I \cap J \neq \emptyset \}.
  \end{align}
  For any simplex $Q$ that belongs to $\Del{r}{X,E} \cap \Del{}{X,F}$
  but not to $\Del{r}{X,F}$, the~sphere $\DSphere{Q}{E}$ has radius
  at most $r$ but $\DSphere{Q}{F}$ has radius larger than $r$.
  The set of such simplices is partitioned by a subset of the intervals
  in $W$.
  Since in particular $\DSphere{Q}{E} \neq \DSphere{Q}{F}$,
  \theref{First Simplex Pairing Lemma} implies that
  this partition contains no singular intervals.
  By \theref{Generalized Gradient Collapsing Theorem}, $W$ induces the collapse
  $\Del{r}{X,E} \cap \Del{}{X,F} \searrow \Del{r}{X,F}$.
  Note that the pairs
  \begin{align}
    P_0 = \{ \{Q-x,Q+x\} \mid Q \subseteq X,\; x \in X,\; \varphi (Q) = x \}
  \end{align}
  defined using the map $\varphi$ from \theref{First Simplex Pairing Lemma}
  yield a vertex refinement of the generalized gradient $W$.

  We next prove the first collapse, 
  $\sDel{r}{X}{E} \searrow \Del{r}{X,E} \cap \Del{}{X,F}$,
  using the pairs obtained from \theref{Second Simplex Pairing Lemma}
  to partition the complement,
  $\sDel{}{X}{E} \setminus \sDel{}{X}{F}$.
  Revisiting the construction of $\psi$,
  we fix a total order $x_1, x_2, \ldots, x_m$ on $X$ and define
  \begin{align}
    K_0 &= \Del{}{X,E},\\
    K_i &= K_{i-1} \setminus \{Q \subseteq X \mid \psi (Q) = x_i\}.
  \end{align}
  We thus get a filtration
  $K_m \subseteq K_{m-1} \subseteq \ldots \subseteq K_0$
  that starts with $\sDel{}{X}{F}$ and ends with $\sDel{}{X}{E}$.
  By \theref{Vertex Gradient Lemma}, the pairs
  \begin{align}
    P_i  &=  \{ \{Q-x_i,Q+x_i\} \mid Q \subseteq X,\; \psi (Q) = x_i\} ,
             \quad i=1,\dots,m,
  \end{align}
  give rise to a discrete gradient, $V_i$,
  and since $V_i$ partitions $K_i \setminus K_{i+1}$ into pairs,
  it induces a collapse $K_i \searrow K_{i+1}$
  by \theref{Gradient Collapsing Theorem}. 
  The union of all such sets of pairs,
  \begin{align}
    \bigcup_{i=1}^m P_i  &=  \{ \{Q-x,Q+x\} \mid Q \subseteq X,\;
    x \in X,\; v(Q) = x \} ,
  \end{align}
  forms a partition of $\sDel{}{X}{E} \setminus \sDel{}{X}{F}$
  and, by applying \theref{Gradient Composition Lemma} inductively,
  yields a gradient on $\sDel{}{X}{E}$
  inducing the collapse $\sDel{}{X}{E} \searrow \sDel{}{X}{F}$.
\end{proof}
  
We remark that the pairs of \theref{First Simplex Pairing Lemma} and 
\theref{Second Simplex Pairing Lemma} can be combined according to 
\theref{Gradient Composition Lemma}.
The result is a single discrete gradient $V$,
with the pairs $\bigcup_{i=0}^m P_i$, which induce both collapses,
for any choice of $r$ simultaneously.

\paragraph{The collapsing sequence}
We are now ready to state and prove the collapsing sequence,
according to which the \v{C}ech, the \v{C}ech--Delaunay,
the Delaunay, and the Wrap complexes -- all for the same parameter
$r \in \Rspace$ -- are simple-homotopy equivalent.
The relation between the first three complexes follows directly
from \theref{Selective Delaunay Collapsing Theorem},
and it is not difficult to expand the relation to include the Wrap complex.
\begin{theorem}[\v Cech--Delaunay Collapsing Theorem]
\label{Cech--Delaunay Collapsing Theorem}
  Let $X$ be a finite set of possibly weighted points in general position
  in $\Rspace^n$.
  Then
  \begin{align}
    \Cech{r}{X} \searrow \DCech{r}{X} \searrow \Del{r}{X} \searrow \Wrap{r}{X}
  \end{align}
  for every $r \in \Rspace$.
\end{theorem}
\begin{proof}
  Setting $E = \emptyset$ and $F = X$ in
  \theref{Selective Delaunay Collapsing Theorem}, we get the first
  two relations.
  To see the third, recall that $\Wrap{r}{X} \subseteq \Del{r}{X}$
  for every $r \in \Rspace$.
  To show that the latter complex collapses onto the former complex,
  we consider the gradient $V_X$ of the Delaunay radius function
  $\radfun{X} \colon \DelTri{X} \to \Rspace$.
  Each interval in $V_X$ is either disjoint of $\Del{r}{X}$
  or a subset of $\Del{r}{X}$, and similarly for $\Wrap{r}{X}$.
  Moreover, by construction all critical simplices of $\Del{r}{X}$
  are also contained in $\Wrap{r}{X}$.
  It follows that $\Del{r}{X} \setminus \Wrap{r}{X}$ is the
  disjoint union of non-singular intervals in $V_X$,
  and \theref{Generalized Gradient Collapsing Theorem}
  implies $\Del{r}{X} \searrow \Wrap{r}{X}$.
\end{proof}

Independent of $r$, all collapses are induced by the same discrete
gradient constructed in \theref{Selective Delaunay Collapsing Theorem}. 

\section{Consequences}
\label{sec6}

Next, we discuss two structural results implied by our collapsing sequence.
The first concerns the persistent homology of the filtrations obtained
by letting $s = r^2$ range over $\Rspace$,
while the second uses selective Delaunay complexes to compare
Delaunay complexes of different point sets.

\paragraph{Naturality and persistence}
Regarding a filtration as a diagram of topological spaces connected by inclusions,
a \emph{natural transformation} from a filtration $(K_t)_{t \in \Rspace}$
to another filtration $(L_t)_{t \in \Rspace}$
is a family of continuous maps $K_t \to L_t$ such that the diagram 
\[
\begin{tikzcd}[column sep=4.5ex]
  K_r \arrow{r} \arrow[hook]{d} & L_r \arrow[hook]{d}\\
  K_t \arrow{r} & L_t
\end{tikzcd}
\]
commutes for all $r \leq t$.
The \emph{persistent homology} of $(K_t)_{t \in \Rspace}$
is the diagram of homology groups $H_*(K_t)$ connected by the homomorphisms
induced by the inclusions $K_r \hookrightarrow K_t$ for $r \leq t$.
Since homology is a functor, it sends a natural transformation
of filtrations to a natural transformation of their persistent homology.
By \theref{Cech--Delaunay Collapsing Theorem}, the diagram
$$
\begin{tikzcd}[arrows=hook,column sep=4.5ex]
  \Wrap{r}{X} \arrow {r}\arrow{d} & \Del{r}{X} \arrow {r}\arrow{d} & \DCech{r}{X} \arrow{r}\arrow{d} & \Cech{r}{X} \arrow{d}\\
  \Wrap{t}{X} \arrow {r} & \Del{t}{X} \arrow {r} & \DCech{t}{X} \arrow {r} & \Cech{t}{X}
\end{tikzcd}
$$
commutes for all $r \leq t$.
The horizontal inclusion maps in this diagram correspond to the collapses 
of \theref{Cech--Delaunay Collapsing Theorem}.
This means that for any two of the four filtrations,
the inclusion maps constitute a natural transformation,
which is a simple-homotopy equivalence at each filtration index.
As a consequence,
we have the following implication on the persistent homology of the filtrations.
\begin{corollary}[Persistence Isomorphism Corollary]
  The \v Cech, Delaunay--\v Cech, Delaunay, and Wrap filtrations 
  have isomorphic persistent homology.
\end{corollary}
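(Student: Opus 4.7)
The plan is to leverage the commutative diagram displayed just before the corollary statement together with \theref{Collapsing Hierarchy Theorem}, and then invoke the standard principle that a pointwise isomorphism of persistence modules is an isomorphism.

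First I would check that the displayed diagram really commutes. Each horizontal arrow is the inclusion of a subcomplex into the next ($\Wrap{r}{X} \subseteq \Del{r}{X} \subseteq \DCech{r}{X} \subseteq \Cech{r}{X}$), and each vertical arrow is a filtration inclusion from index $r$ to index $t \geq r$. Since all eight maps are set-theoretic inclusions of subcomplexes of the full simplex $\Delta(X)$, commutativity of every square is automatic.

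Next I would apply the singular homology functor to the entire diagram. By \theref{Collapsing Hierarchy Theorem}, each horizontal inclusion at a fixed index $r$ realizes a sequence of simplicial collapses from the larger complex onto the smaller one, hence is a homotopy equivalence and induces an isomorphism on every homology group. This yields, for any pair of the four filtrations, a natural transformation of their persistent homology modules that is an isomorphism at every filtration index.

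Finally I would conclude by the standard fact that a morphism of persistence modules which is an isomorphism at every index is an isomorphism in the category of persistence modules; therefore the four persistent homology modules are pairwise isomorphic. The main point to keep track of is naturality, which rests on the remark following the proof of \theref{Selective Delaunay Collapsing Theorem}: the discrete gradient inducing all the collapses does not depend on $r$, and this is precisely what makes the commuting squares consistent across filtration indices. Beyond this bookkeeping, there is no genuine obstacle in the argument.
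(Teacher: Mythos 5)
Your proposal is correct and follows essentially the same route as the paper: commutativity of the diagram of inclusions, the Collapsing Hierarchy Theorem giving homotopy equivalences at each filtration index, and functoriality of homology yielding a natural transformation that is a pointwise (hence global) isomorphism of persistence modules. One small remark: naturality already follows from all maps being subcomplex inclusions, so the $r$-independence of the discrete gradient is not actually needed for this corollary, although it is what gives the stronger statement that a single gradient induces all the collapses.
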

It should be clear that this corollary extends to the filtrations of complexes
considered in \theref{Selective Delaunay Collapsing Theorem}.

\subsection{Connecting Delaunay complexes for different point sets}
We now describe an application of \theref{Selective Delaunay Collapsing Theorem}
that highlights the selective Delaunay complexes as interesting objects
in their own right.

Assume we are given two finite point sets $X$ and $Y$ in $\Rspace^n$,
and denote the corresponding unions of balls of radius $r$
by $\Ball{r}{X}$ and $\Ball{r}{Y}$.
We think of $X$ as a geometric approximation of $Y$,
or of both as different approximations of some compact space.
We are interested in the homomorphisms in homology induced
by the inclusions $\Ball{r}{X} , \Ball{r}{Y}\hookrightarrow \Ball{r}{X \cup Y}$.
Ideally, the induced homomorphisms are isomorphisms for an appropriate choice
of $r$.
However, this cannot always be achieved and no such choice of $r$ may exists, 
even if the Hausdorff distance $\delta=d_H(X,Y)$ is small.
Nevertheless, the induced homomorphisms constitute natural transformations 
that can be thought of as approximate isomorphisms, up to a shift by $\delta$ 
in the index~$r$ of the diagrams.
Known as \emph{$\delta$-interleavings} \cite{Chazal2009Proximity,Bauer2015Induced},
they translate the geometric closeness of $X$ and $Y$ into a structural similarity
of their persistent homology.
The persistent homology is described uniquely up to isomorphism
by a collection of intervals, called the \emph{persistence barcode},
and indeed the mentioned homomorphisms further induce a matching
between the persistence barcodes of $(\Ball{r}{X})_{r \in \Rspace}$
and $(\Ball{r}{Y})_{r \in \Rspace}$ that makes this similarity
explicit \cite{Bauer2015Induced}.
The relevance of interleavings motivates the interest in the above
inclusion maps and their homology.

To construct the homology of these inclusion maps combinatorially,
we observe that the maps
$\Ball{r}{X}, \Ball{r}{Y} \hookrightarrow \Ball{r}{X \cup Y}$
can be described up to isomorphisms on the level of \v{C}ech complexes,
as both $\Cech{r}{X}$ and $\Cech{r}{Y}$ are subcomplexes
of $\Cech{r}{X \cup Y}$; see the diagram below.
The situation is different for Delaunay complexes because there is
no canonical simplicial map $\Del{r}{X} \to \Del{r}{X \cup Y}$
corresponding to the inclusion $X \hookrightarrow {X \cup Y}$.
To cope, we use selective Delaunay complexes
and construct a zigzag of inclusions connecting the two Delaunay complexes
$\Del{r}{X}$ and $\Del{r}{Y}$;
see the last two rows of the following diagram.

\begin{figure}[h]
\begin{tikzcd}[row sep=6ex,column sep=12ex]
\Cech{r}{X} \ar[right hook->]{rr} \ar[<-left hook,swap]{dd}{\simeq} &[between origins] &[between origins] \Cech{r}{X \cup Y}  &[between origins] &[between origins] \Cech{r}{Y} \ar[left hook->]{ll} \ar[<-right hook]{dd}{\simeq} \\
&[between origins] \sDel{r}{X \cup Y}{X} \ar[right hook->]{ur}{\simeq} &[between origins] &[between origins] \sDel{r}{X \cup Y}{Y} \ar[left hook->,swap]{ul}{\simeq} \\
\Del{r}{X} \ar[right hook->]{ur} &[between origins] &[between origins] \Del{r}{X \cup Y} \ar[left hook->,swap]{ul}{\simeq} \ar[right hook->]{ur}{\simeq} &[between origins] &[between origins] \Del{r}{Y} \ar[left hook->]{ul} 
\end{tikzcd}
\end{figure}

First note that $\Del{r}{X} = \sDel{r}{X}{X}$ is a subcomplex
of $\sDel{r}{X \cup Y}{X}$,
because adding $Y$ does not impose any constraints on how the points
in $X$ connect to each other.
Moreover, $\Del{r}{X \cup Y} = \sDel{r}{X \cup Y}{X \cup Y}$
is a subcomplex of $\sDel{r}{X \cup Y}{X}$,
because the former has a larger set of excluded points than the latter.
By \theref{Selective Delaunay Collapsing Theorem},
$\sDel{r}{X \cup Y}{X}$ collapses to $\Del{r}{X \cup Y}$.
Similar relations hold if we swap~$X$~and~$Y$.

From this diagram, one can see that the inclusions
$\Cech{r}{X} \hookrightarrow \Cech{r}{X \cup Y}$
are naturally homotopy-equivalent to the inclusions
$\Del{r}{X} \hookrightarrow \Del{r}{X \cup Y,X}$.
In particular, the natural transformation
$H_*(\Cech{r}{X} \hookrightarrow \Cech{r}{X \cup Y})$ is isomorphic to
$H_*(\Del{r}{X} \hookrightarrow \sDel{r}{X \cup Y}{X})$.
Furthermore, the discrete gradient from
\theref{Selective Delaunay Collapsing Theorem} inducing the collapse
$\sDel{r}{X \cup Y}{X} \searrow \Del{r}{X \cup Y}$
can be used to construct the induced isomorphism
$H_*(\sDel{r}{X \cup Y}{X}) \to H_*(\Del{r}{X \cup Y})$ on the level of cycles,
see \cite[Sections 7 and 8]{Forman1998Morse}.

\section{Discussion}
\label{sec7}

The main result of this paper is the construction of collapses
from the \v{C}ech to the Delaunay--\v{C}ech to the Delaunay and
finally to the Wrap complex of a finite set of possibly weighted
points in general position in $\Rspace^n$.
This is achieved by finding a common refinement of the generalized
discrete gradients of the \v{C}ech and Delaunay radius functions
into a generalized discrete gradient with the same set of
critical simplices.
While the collapse of the Delaunay--\v{C}ech to the Delaunay complex
is induced by a canonical generalized gradient, the construction of
the collapse from the \v{C}ech to the Delaunay--\v{C}ech complex
required the choice of a total order.
Is there an alternative proof that does not rely on such a choice?
The hope is that a natural such construction would reveal some
of the constraints on the collapse imposed by the geometry of the data.

We remark that the proof in this paper makes essential use of the
general position assumption.
Can the generalized discrete Morse theory be further generalized
so that this assumption is no longer necessary?

\bibliographystyle{abbrvnaturl}
\bibliography{collapsing-delaunay}

\todos

\end{document}